\newtheorem{theorem}{Theorem}
\newtheorem{lemma}[theorem]{Lemma}
\newtheorem{corollary}[theorem]{Corollary}
\newcommand\case[1]{\paragraph{Case #1.}}
\newcommand{\ar}{\ensuremath{\mathit{AR}}\xspace}
\newcommand{\at}{\ensuremath{\mathit{AT}}\xspace}
\newcommand{\iar}{\ensuremath{\mathit{IAR}}\xspace}
\renewcommand{\sp}{\ensuremath{\mathit{SP}}\xspace}
\newcommand{\spm}{\ensuremath{\mathit{SPM}}\xspace}
\newcommand{\spt}{\ensuremath{\mathit{SPT}_r}\xspace}
\newcommand{\free}{\ensuremath{\mathit{Free}}\xspace}
\renewcommand{\H}{\ensuremath{\mathcal{H}}\xspace}
\newcommand{\parent}{\ensuremath{\mathit{parent}}\xspace}
\newcommand{\etal}{\emph{et al.}\xspace}
\newcommand{\ie}{\emph{i.e.}\xspace}
\title{An Optimal Algorithm to Compute the Inverse Beacon Attraction Region}
\author{Irina Kostitsyna\thanks{TU Eindhoven, the Netherlands  \texttt{i.kostitsyna@tue.nl}}
\and
Bahram Kouhestani\thanks{Queen's University, Canada \texttt{\{kouhesta,daver\}@cs.queensu.ca}}
\and
Stefan Langerman\thanks{Directeur de Recherches du F.R.S.-FNRS., Universit\'e Libre de Bruxelles, Belgium \texttt{stefan.langerman@ulb.ac.be}}
\and
David Rappaport\footnotemark[2]}
\date{}
\begin{document}
\maketitle

\begin{abstract}
The \emph{beacon model} is a recent paradigm for guiding the
trajectory of messages or small robotic agents in complex
environments. 
A \emph{beacon} is a fixed point with an attraction pull that can move points within a given polygon. Points move greedily towards a beacon: if unobstructed, they move along a straight line to the beacon, and otherwise they slide on the edges of the polygon. The Euclidean distance from a moving point to a beacon is monotonically decreasing. A given beacon \emph{attracts} a point if the point eventually reaches the beacon.

The problem of attracting all points within a polygon with a set of beacons can be viewed as a variation of the art gallery problem. Unlike most variations, the beacon attraction has the intriguing property of being asymmetric, leading to separate definitions of \emph{attraction region} and \emph{inverse attraction region}. 
The attraction region of a beacon is the set of points that it
attracts. It is connected and can be computed in linear time for simple
polygons. 
By contrast, it is known that the inverse attraction region of a point
--- the set of beacon positions that attract it --- could have
$\Omega(n)$ disjoint connected components.

In this paper, we prove that, in spite of this, the total complexity
of the inverse attraction region of a point in a simple polygon is
linear, and present a $O(n \log n)$ time algorithm to
construct it. This improves upon the best previous algorithm which
required $O(n^3)$ time and $O(n^2)$ space.
Furthermore we prove a matching $\Omega(n\log n)$ lower bound for
this task in the algebraic computation tree model of computation, even
if the polygon is monotone.

\end{abstract}

\section{Introduction}
Consider a dense network of sensors. In practice, it is common that routing between two nodes in the network is performed by greedy geographical routing, 
where a node sends the message to its closest neighbour (by Euclidean distance) to the destination~\cite{KT10}. Depending on the geometry of the network, greedy routing may not be successful between all pairs of nodes. Thus, it is essential to determine nodes of the network for which this type of routing works. In particular, given a node in the network, it is important to compute all nodes that can successfully send a message to or receive a message from the input node. Greedy routing has been studied extensively in the literature of sensor network as a local (and therefore inexpensive) protocol for message sending.

Let $P$ be a simple polygon with $n$ vertices. A \textit{beacon} $b$ is a point in $P$ that can induce an attraction pull towards itself within $P$. 
The attraction of $b$ causes points in $P$ to move towards $b$ as long as their Euclidean distance is maximally decreasing. 
As a result, a point $p$ moves along the ray $\overrightarrow{pb}$ until it either reaches $b$ or an edge of $P$. In the latter case, $p$ slides on the edge towards $h$, the orthogonal projection of $b$ on the supporting line of the edge (Figure~\ref{fig15}). Note that among all points on the supporting line of the edge, $h$ has the minimum Euclidean distance to $b$. 

\begin{figure}[t]
\centering
\includegraphics{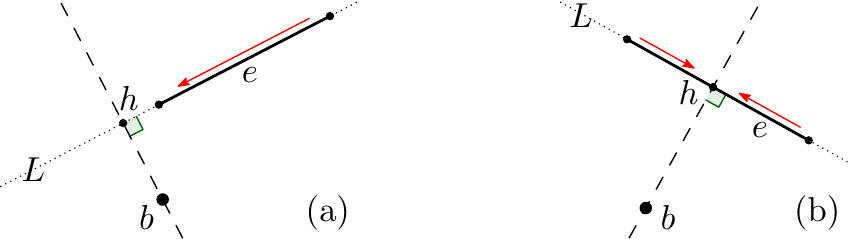}
\caption{\label{fig15} Points on an edge $e$ slide towards the orthogonal projection $h$ of the beacon on the supporting line of $e$.}
\end{figure}

We say $b$ \textit{attracts} $p$, if $p$ eventually reaches $b$.
Interestingly, beacon attraction is not symmetric. 
The \textit{attraction region} $\ar(b)$ of a beacon $b$ is the set of all points in $P$ that $b$ attracts%
\footnote{We consider the attraction region to be closed, \ie $b$ attracts all points on the boundary of  $\ar(b)$.}.
The \textit{inverse attraction region} $\iar(p)$ of a point $p$ is the set of all beacon positions in $P$ that can attract
$p$.  

The study of beacon attraction problems in a geometric domain, initiated by Biro
\etal~\cite{BGIKM}, finds its root in sensor 
networks, where the limited capabilities of sensors makes it crucial
to design simple mechanisms for guiding their motion and
communication. 
For instance, the beacon model can be used to represent the trajectory
of small robotic agents in a polygonal domain, or that of messages in
a dense sensor network.   
Using  greedy routing, the trajectory of a robot (or a message) from a sender to a
receiver closely follows the attraction trajectory of a point (the
sender) towards a beacon (the receiver). 
However, greedy routing may not be successful between all pairs of
nodes. Thus, it is essential to characterize for which pairs of nodes
of the network for which this type of routing works. In particular,
given a single node, it is important to compute the set of nodes that
it can successfully receive messages from (its attraction region), and
the set of node that it can successfully send messages to (its inverse
attraction region).

In 2013, Biro et al.~\cite{BIKM} showed that the attraction region
$\ar(b)$ of a beacon $b$ in a simple polygon $P$ is simple and
connected, and presented a linear time algorithm to
compute $\ar(b)$.

Computing the inverse attraction region has proved to be more challenging.
It is known~\cite{BIKM} that the inverse
attraction region $\iar(p)$ of a point $p$ is not necessarily
connected and can have $\Theta(n)$ connected components. 
Kouhestani~\etal~\cite{KRS15} presented an algorithm to compute
$\iar(p)$ in $O(n^3)$ time and $O(n^2)$ space. In the
special cases of monotone and terrain polygons, they showed improved
algorithms with running times $O(n \log n)$ and $O(n)$ respectively. 

In this paper, we prove that, in spite of not being connected, the inverse
attraction region $\iar(p)$ always has total
complexity\footnote{Total number of vertices and edges of all
  connected components.} $O(n)$. 
Using this fact, we present the first optimal $O(n\log n)$
time algorithm for computing $\iar(p)$ for any simple polygon $P$,
improving upon the previous best known $O(n^3)$ time algorithm.
Since this task is at the heart of other algorithms for
solving beacon routing problems, this  improves the time
complexity of several previously known algorithms such as
approximating minimum beacon paths and computing the weak attraction
region of a region~\cite{BIKM}. 

To prove the optimality of our algorithm, we show an 
$\Omega(n\log n)$ lower bound in the algebraic computation tree model
and in the bounded degree algebraic decision tree model, even in the case
when the polygon is monotone.

\subsection*{Related work}


Several geometric problems related to the beacon model have been studied in recent years.  
Biro  \etal~\cite{BGIKM} studied the minimum number of beacons
necessary to successfully route between any pair of points in a simple
$n$-gon $P$. This can be viewed as a variant of the art gallery problem,
where one wants to find the minimum number of beacons whose
attraction regions cover $P$. They proved that $\left
  \lceil\frac{n}{2}\right \rceil$ beacons are sometimes necessary and
always sufficient, 
and showed that finding a minimum cardinality set of beacons to cover a simple polygon is NP-hard. 
For polygons with holes, Biro  \etal~\cite{BGIKM2} showed that $\left \lceil\frac{n}{2}\right \rceil-h-1$ beacons are sometimes necessary and $\left \lceil\frac{n}{2}\right \rceil+h-1$ beacons are always sufficient to guard a polygon with $h$ holes. 
Combinatorial results on the use of beacons in orthogonal polygons have been studied by Bae \etal~\cite{BSV}  and by Shermer~\cite{S15}. 
Biro \etal~\cite{BIKM} presented a polynomial time algorithm for routing between two fixed points using a discrete set of candidate beacons in a simple polygon and gave a 2-approximation algorithm where the beacons are placed with no restrictions. 
Kouhestani~\etal~\cite{KRS14} give an  O($n \log n$) time algorithm
for beacon routing in a 1.5D polygonal terrain. 

Kouhestani~\etal~\cite{KRS15b} showed that the length of a successful
beacon trajectory is less than $\sqrt{2}$ times the length of a
shortest (geodesic) path. In contrast, if the polygon has internal
holes then the length of a successful beacon trajectory may be
unbounded. 





\section{Preliminaries}
A \emph{dead point} $d \neq b$ is defined as a point that remains stationary in the attraction pull of $b$. The set of all points in $P$ that eventually reach (and stay) on $d$ is called the \emph{dead region} of $b$ with respect to $d$. 
A \emph{split edge} is defined as the boundary between two dead regions, or a dead region and $\ar(b)$. In the latter case, we call the split edge a \emph{separation edge}.  

If beacon $b$ attracts a point $p$, we use the term \emph{attraction trajectory}, denoted by $\at(p,b)$, to indicate the movement path of a point $p$ from its original location to $b$. 
The attraction trajectory alternates between a straight movement
towards the beacon (a \emph{pull edge}) and a sequence of consecutive
sliding movements (\textit{slide edges}), see Figure~\ref{fig16}. 

\begin{figure}[t]
\centering
\includegraphics{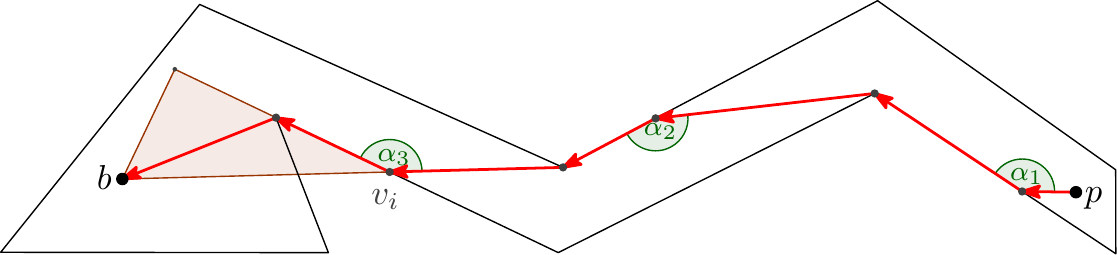}
\caption{\label{fig16} The angle between a straight movement towards the beacon and the following slide movement is always greater than $\pi / 2$.}
\end{figure}

%
\begin{lemma}
\label{lem:angle}
Consider the movement of a point $p$ in the attraction of a beacon $b$. 
Let $\alpha_i$ denote the angle between the $i$-th pull edge and the next slide edge on $\at(p,b)$ (Figure~\ref{fig16}). Then $\alpha_i$ is greater than $\pi / 2$.
\end{lemma}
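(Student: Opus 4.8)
The plan is to reduce the statement to an elementary fact about right triangles. Fix $i$ and let $q$ be the point at which the $i$-th pull edge reaches the boundary of $P$; let $e$ be the edge along which the point then slides, $\ell$ its supporting line, and $h$ the orthogonal projection of $b$ onto $\ell$. By the definition of the motion, the slide edge leaves $q$ in the direction $\overrightarrow{qh}$, so the whole statement is a claim about the three points $b$, $q$, $h$.

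First I would pin down the two directions whose angle is $\alpha_i$. The pull edge is a sub-segment of the ray $\overrightarrow{pb}$ that stops at $q$ before reaching $b$; hence, viewed as a ray emanating from $q$ toward the point's previous position, it points opposite to $\overrightarrow{qb}$. The angle $\alpha_i$ at the bend of the trajectory is therefore $\alpha_i = \pi - \angle bqh$, so it suffices to show $\angle bqh < \pi/2$.

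The key step is then to use the right angle produced by the projection: since $q \in \ell$ and $bh \perp \ell$, the triangle $bqh$ has a right angle at $h$, whence $\angle bqh = \pi/2 - \angle qbh \le \pi/2$, with equality only if the triangle degenerates. It does not degenerate: $q \neq h$ because the point genuinely slides (it has not yet reached the point of $\ell$ closest to $b$), and $b \notin \ell$ in this configuration — otherwise $p$, $q$, $b$ would be collinear along $\ell$ and the ``pull'' would in fact be a slide. Therefore $\angle bqh < \pi/2$ and $\alpha_i = \pi - \angle bqh > \pi/2$.

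The argument is short; the only points needing care are the two bookkeeping issues: fixing the orientation convention so that $\alpha_i = \pi - \angle bqh$ (rather than $\angle bqh$ itself), and ruling out the degenerate cases $q=h$ and $b \in \ell$ when a pull edge is genuinely followed by a slide edge (one should also note that when $q$ is a reflex vertex rather than an interior point of $e$, the same reasoning applies to whichever incident edge is used for sliding). I expect this case analysis to be the main, though minor, obstacle; the rest follows immediately from the right angle at $h$.
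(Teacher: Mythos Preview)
Your argument is correct and is essentially the same as the paper's: both form the right triangle with vertices $b$, the foot $h$ of the perpendicular from $b$ to the slide edge, and the common point $q$ of the pull and slide edges, observe that the angle at $q$ in this triangle is acute, and conclude that its supplement $\alpha_i$ exceeds $\pi/2$. You are simply more explicit about the orientation convention and the non-degeneracy of the triangle.
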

\begin{proof}
Recall that a pull edge is always oriented towards $b$, and a slide edge is oriented towards the orthogonal projection of $b$ on the edge.
Consider the right triangle with vertices $b$, the orthogonal projection of $b$ on the supporting line of the slide edge, and $v_i$ the vertex common to the $i$-th pull edge and the next slide edge (the colored triangle in Figure~\ref{fig16}).  
Note that in this right  triangle, the angle of the vertex $v_i$ must be acute. Therefore, the angle of $\alpha_i$, which is the complement of $v$, is greater than $\pi / 2$. 
\end{proof}

Note that, similarly, the angle between the $i$-th pull edge and the previous slide edge is also greater than $\pi /2$.

\begin{figure}[b]
\centering
\includegraphics{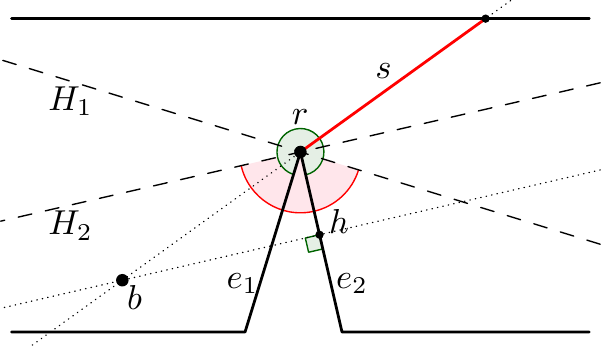}
\caption{\label{fig18} The deadwedge of $r$ is shown by the red angle.}
\end{figure}

Let $r$ be a reflex vertex of $P$ with adjacent edges $e_1$ and $e_2$. 
Let $H_1$ be the half-plane orthogonal to $e_1$ at $r$, that contains $e_1$. Let $H_2$ be the half-plane orthogonal to $e_2$ at $r$, that contains $e_2$.
The \emph{deadwedge} of $r$ (deadwedge($r$)) is defined as $H_1 \cap H_2$ (Figure~\ref{fig18}).
Let $b$ be a beacon in the deadwedge of $r$. Let $\rho$ be the ray from $r$ in the direction $\overrightarrow{br}$ and let $s$ be the line segment between $r$ and the first intersection of $\rho$ with the boundary of $P$. 
Note that in the attraction of $b$, points on different sides of $s$ have different destinations. Thus, $s$ is a split edge for $b$. We say $r$ \emph{introduces} the split edge $s$ for $b$ to show this occurrence. Kouhestani \etal~\cite{KRS15} proved the following lemma.

\begin{lemma}[Kouhestani \etal~\cite{KRS15}]
\label{lem:deadwedge}
A reflex vertex $r$ introduces a split edge for the beacon $b$ if and only if $b$ is inside the deadwedge of $r$.
\end{lemma}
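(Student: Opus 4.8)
The approach is to trace the attraction flow of $b$ inside an arbitrarily small disk around $r$, using only the sliding rule together with Lemma~\ref{lem:angle}. Normalise so that $r$ is the origin, and let $u_1,u_2$ be the unit vectors pointing from $r$ along $e_1,e_2$; then $b\in H_i$ precisely when the orthogonal projection $h_i$ of $b$ onto the supporting line of $e_i$ lies on the closed ray $\{r+tu_i:t\ge 0\}$ (the ``edge side'' of $r$), and $b\notin H_i$ when $h_i$ lies strictly on the other side. Two elementary facts about the reflex cone at $r$ will be used. First, a short angular computation shows that the $\pi$-rotation of the deadwedge $H_1\cap H_2$ lies strictly inside the interior cone of $r$ (the deadwedge is concentric with the exterior wedge of $r$ and has angular width $\beta-\pi$, where $\beta>\pi$ is the interior angle at $r$); hence, whenever $b\in H_1\cap H_2$, the direction $\overrightarrow{br}$ points strictly into $P$, so the ray $\rho$ and the segment $s$ are non-degenerate. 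Second, for a reflex vertex the interior of $P$ near $r$ is the union of the interior half-planes of the supporting lines of $e_1$ and $e_2$, so the intersection of those two half-planes is the cone vertically opposite the exterior wedge, which is disjoint from the deadwedge; consequently $b\in H_1\cap H_2$ forces $b$ to lie on the exterior side of the supporting line of $e_1$ or of $e_2$.

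For the ``if'' direction, assume $b\in H_1\cap H_2$ and pick points $p^{+},p^{-}$ at distance $\varepsilon$ from $r$, one on each side of $s$, with $\varepsilon$ small compared with the distance from $r$ to $b$. The first point is that their trajectories separate immediately: because $b$ lies on the far side of $r$ from $s$, for at least one of the two sides the segment from the point to $b$, as seen from $r$, sweeps through the exterior wedge while still within distance $O(\varepsilon)$ of $r$, so that point first meets $e_1$ or $e_2$ next to $r$, whereas the point on the other side stays inside the reflex cone and heads straight to $b$ (or is deflected onto the other edge, symmetrically). Say $p^{-}$ lands on $e_2$. By the second fact we may arrange notation so that $b$ lies on the exterior side of the supporting line of $e_2$, and $b\in H_2$ places $h_2$ on $e_2$ beyond the landing point, so $p^{-}$ slides away from $r$ to $h_2$ and can then go no further: the direction from $h_2$ to $b$ is orthogonal to $e_2$ and, $b$ being on the exterior side, points into the region exterior to $P$ that borders $e_2$. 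Hence $h_2$ is a dead point, the destination of $p^{-}$. Finally one checks that the two trajectories stay on opposite sides of the line $\ell=\overline{br}$ — a pull edge runs along $\ell$ or meets it only at $b$, a slide edge meets it only where $\ell$ meets $\partial P$, and distance-monotonicity forbids a trajectory issued near $r$ from reaching those crossings — so $p^{+}$ never visits $h_2$, which lies on $p^{-}$'s side of $\ell$, and the two destinations differ. Thus $s$ separates regions of distinct destination, that is, it is a split edge introduced by $r$. (If $e_2$ is too short to contain $h_2$, the sliding point is handed on at the far endpoint of $e_2$ and the same bookkeeping continues one vertex further; this lengthens but does not alter the argument.)

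For the ``only if'' direction, argue the contrapositive: suppose $b\notin H_1\cap H_2$, say $b\notin H_1$, so $h_1$ lies strictly beyond $r$. If $\rho$ leaves $P$ at once, then $s$ is the single point $r$ and separates nothing. Otherwise, if $\overrightarrow{rb}$ lies in the exterior wedge then necessarily $b\in H_2$, and a point that meets $e_1$ near $r$ slides back onto $r$, is blocked there, and continues along $e_2$ toward $h_2$ — which is exactly what a point that meets $e_2$ directly does — so the two sides of $s$ are funnelled to a common destination; while if $\overrightarrow{rb}$ lies in the interior cone, then every point near $r$ (whether it heads straight for $b$, or is deflected onto $e_1$ and slides back to $r$, or is deflected onto $e_2$) ultimately reaches $b$. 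Either way $r$ introduces no split edge. I expect the ``if'' direction to be the real work, and its delicate point is the claim that the deflected point genuinely becomes stuck at $h_i$ — which is exactly where the two facts of the first paragraph are used — together with the handling of edges short enough to force a hand-off at a neighbouring vertex; Lemma~\ref{lem:angle} and the distance-monotonicity of the flow are the tools that keep these under control.
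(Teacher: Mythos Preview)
The paper does not contain a proof of this lemma. It is stated as a result of Kouhestani \etal~\cite{KRS15} and quoted without argument, so there is no ``paper proof'' to compare your plan against.

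As for the plan itself: the geometric set-up is sound. Your two preliminary facts are correct --- the $\pi$-rotation of the deadwedge does lie inside the interior cone at $r$ (so $s$ is a genuine segment), and the deadwedge is disjoint from the intersection of the two interior half-planes of $e_1,e_2$ (so $b$ is on the exterior side of at least one supporting line). One point to tighten in the ``if'' direction: you write ``Say $p^{-}$ lands on $e_2$. By the second fact we may arrange notation so that $b$ lies on the exterior side of the supporting line of $e_2$.'' These are not independent choices. You should instead argue that if $p^{-}$, pulled toward $b$, first meets $e_2$ near $r$, then the segment $\overline{p^{-}b}$ exits the interior cone through $e_2$, which already forces $b$ to lie on the exterior side of the supporting line of $e_2$; the ``second fact'' then serves only as a sanity check. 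With that adjustment the sketch is correct, and the case analysis in the contrapositive direction is fine. The hand-off issue you flag (when $e_2$ is too short to contain $h_2$) is real but, as you say, only lengthens the bookkeeping; note that in that case $h_2$ need not itself be the dead point, but the sliding trajectory still diverges from that of $p^{+}$, which is what you actually need.
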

%
%
Let $p$ and $q$ be two points in a polygon $P$. We use $\overline{pq}$ to denote the straight-line segment between these points.
Denote the shortest path between $p$ and $q$ in $P$ (the geodesic path) as $\sp(p,q)$. 
The union of shortest paths from $p$ to all vertices of $P$ is called
the \emph{shortest path tree} of $p$, and can be computed in linear
time~\cite{GHLST} when $P$ is a simple polygon.
In our problem, we are only interested in shortest paths from $p$ to reflex vertices of $P$. 
Therefore, we delete all convex vertices and their adjacent edges in
the shortest path tree of $p$ to obtain the  \emph{pruned shortest
  path tree} of $p$, denoted by $\spt(p)$. 

A \emph{shortest path map} for a given point $p$, denoted as
$\spm(p)$, is a subdivision of $P$ into regions such that shortest
paths from $p$ to all the points inside the same region pass through
the same set of vertices of $P$~\cite{LP}. Typically, shortest path
maps are considered in the context of polygons with holes, where the
subdivision represents grouping of the shortest paths of the same
topology, and the regions may have curved boundaries. In the case of a
simple polygon, the boundaries of $\spm(p)$ are straight-line segments
and
consist solely of the edges of $P$ and extensions of the edges of $\spt(p)$. If a triangulation of $P$ is given, it can be computed in linear time~\cite{GHLST}.

\begin{lemma}
\label{lem:moninc}
During the movement of $p$ on its beacon trajectory, the shortest path distance of $p$ away from its original location monotonically increases. 
\end{lemma}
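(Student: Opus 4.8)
The plan is to parametrize the beacon trajectory $\at(p,b)$ as a curve $x(t)$ (by arc length, with $x(0)=p$) and to follow the function $f(t)=|\sp(p,x(t))|$, the geodesic distance of the moving point from the source. This $f$ is continuous, and on the interior of each cell of the shortest path map $\spm(p)$ that the trajectory crosses it is differentiable: if $w$ is the last vertex of $\sp(p,x)$ on that cell (with $w=p$ when $x$ sees $p$), then $|\sp(p,x)|=|\sp(p,w)|+|\overline{wx}|$, so $f'(t)$ equals the component of the velocity $x'(t)$ in the direction $\overrightarrow{w\,x(t)}$. Since the trajectory consists of finitely many pull and slide edges and each of these meets $\spm(p)$ in finitely many pieces, it suffices to show on every such piece that this component is nonnegative, i.e.\ that the direction of motion makes a nonobtuse angle with the ray from the current anchor $w$ through $x(t)$; continuity of $f$ then upgrades this to monotonicity along the whole trajectory.

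I would then separate the two kinds of motion. On a pull edge the point moves toward $b$, so the condition to verify is $\angle w\,x(t)\,b\ge\pi/2$ (angle at $x(t)$). On a slide edge lying on a polygon edge $e$ the point moves toward $h$, the orthogonal projection of $b$ on the supporting line of $e$; since $\overrightarrow{x(t)\,h}$ is parallel to $e$, the condition $\angle w\,x(t)\,h\ge\pi/2$ just says that $w$ is ``behind'' $x(t)$ in the slide direction, i.e.\ the orthogonal projection of $w$ on the supporting line of $e$ lies on the side of $x(t)$ away from $h$. Both conditions are immediate in the favourable case where $w$ is the natural starting vertex of the current segment. For a pull edge this is the vertex $u$ where the pull edge begins: a pull edge is a subsegment of $\overline{ub}$, so $u$, $x(t)$, $b$ are collinear in that order and the angle is $\pi$. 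For a slide edge that begins at a reflex vertex of $P$ or at $p$ itself, taking $w$ to be that point works trivially, as it is the starting endpoint of the slide. For a slide edge that begins at the point $q$ where a pull edge from a vertex $u$ meets $e$, taking $w=u$ works by a one-line similar-triangles computation: in coordinates where the supporting line of $e$ is an axis, $h$ the origin and $b$ off the axis, collinearity of $u$, $q$, $b$ together with $b$ lying on the far side of $e$ forces the projection of $u$ to lie strictly beyond $q$, hence strictly behind the whole slide segment, which runs from $q$ toward $h$.

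What remains, and what I expect to be the real obstacle, is controlling the anchor $w$ as $x(t)$ advances along a segment: one must rule out $w$ ever drifting to the near side of $x(t)$ (slide case) or ever making $\angle w\,x(t)\,b$ acute (pull case). Along one segment the anchor changes only when $x(t)$ crosses an edge of $\spm(p)$, where $\sp(p,x)$ gains or loses a reflex vertex, so this is finite bookkeeping over the segments and crossings. The two ingredients I would use are: (i) in a simple polygon shortest paths are polygonal, turn only at reflex vertices, and the last segment $\overline{w\,x(t)}$ meets $x(t)$ from the interior side of $P$, so for $w$ to land on the near side of $x(t)$ the geodesic would first have to wrap around a reflex vertex that the trajectory has itself already passed around; and (ii) Lemma~\ref{lem:angle} and the remark following it, which guarantee that a pull edge and an adjacent slide edge meet at an angle greater than $\pi/2$, so neither the trajectory nor its taut version, the geodesic, doubles back at a pull/slide junction. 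Chaining (i) and (ii) along the sequence of segments should keep the anchor on the correct side, after which the elementary angle inequalities above finish the proof; carrying out this case analysis cleanly over all anchor changes and junction types is the delicate step.
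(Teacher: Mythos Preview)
Your differential approach---parametrize the trajectory, track the anchor $w$ of $\sp(p,x(t))$, and check that the velocity makes a nonobtuse angle with $\overrightarrow{w\,x(t)}$---is a reasonable line of attack, but as you yourself flag, the substance of the lemma is precisely the step you leave open: controlling the anchor across changes of shortest-path-map cell. Your ingredients (i) and (ii) are suggestive but not a proof. Ingredient (i) (``for $w$ to land on the near side the geodesic would first have to wrap around a reflex vertex the trajectory already passed'') is a heuristic; nothing you have written rules out the anchor jumping to a reflex vertex that sits ahead of $x(t)$ in the pull or slide direction, and Lemma~\ref{lem:angle} only constrains angles at trajectory junctions, not at the reflex vertices that become new anchors. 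So what you have is a correct reduction to a statement you have not established.

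The paper's proof is quite different and sidesteps anchor-tracking entirely. It argues by contradiction: let $s$ be the first point on the trajectory where the geodesic distance to $p$ decreases, and let $u$ be the last reflex vertex (before $s$) lying on both the attraction trajectory and on $\sp(p,s)$. Orienting so that $\overline{ub}$ is horizontal with $u$ to the left of $b$, the paper observes that on a pull edge the geodesic distance from its starting vertex (hence from $p$, since that vertex lies on $\sp(p,s)$) increases, so $s$ must lie on a slide edge; but a slide can shorten the geodesic to $u$ only by decreasing the $x$-coordinate, which would increase the Euclidean distance to $b$---contradicting the defining property of beacon attraction. The crux is the global invariant ``Euclidean distance to $b$ is monotone decreasing,'' used once in well-chosen coordinates, rather than a local angle bound maintained across anchor changes. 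Reworking your argument around this invariant (or, equivalently, proving directly that the last common vertex $u$ stays on the ``far'' side of $x(t)$ using the horizontal coordinate) would close your gap.
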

\begin{proof}
For the sake of contradiction, let $s$ be the first point that during the movement of $p$, the shortest path away from $p$ decreases. 
Let $u$ be the last reflex vertex (before $s$) common to the attraction trajectory and the shortest path. Without loss of generality assume that the line $\overline{ub}$ is horizontal and $u$ is to the left of $b$. 
Note that on a pull edge with an arbitrary starting reflex vertex $w$, the shortest path away from $w$ monotonically increases, and therefore, as $w$ is a reflex vertex on $\sp(p,s)$, $s$ cannot be on a pull edge, and thus it is on a slide edge. 
The line $\overline{ub}$ is horizontal, therefore, a series of slide edges will result in a decrease in the shortest path towards $u$ only if during the movement of $p$ on these edges, its $x$-coordinate decreases. This results in an increase in the Euclidean distance towards $b$, which is a contradiction.  
\end{proof}

\section{The structure of inverse attraction regions}
The $O(n^3)$ time algorithm of Kouhestani \etal~\cite{KRS15} to
compute the inverse attraction region of a point $p$ in a simple
polygon $P$ 
constructs a line arrangement $A$ with quadratic complexity that
partitions $P$ into regions, such that, either all or none of the
points in a region attract $p$.  
Arrangement $A$, contains three types of lines:
\begin{enumerate}
\item Supporting lines of the deadwedge for each reflex vertex of $P$, 
\item Supporting lines of edges of $\spt(p)$,
\item Supporting lines of edges of $P$. 
\end{enumerate}

%
\begin{lemma}[Kouhestani \etal~\cite{KRS15}]
\label{lem:arrangement}
The boundary edges of $\iar(p)$ lie on the lines of arrangement $A$.
\end{lemma}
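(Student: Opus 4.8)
The plan is to show that every boundary point of $\iar(p)$ is a point where the attraction behaviour of a beacon placed there is "about to change", and that every such change is witnessed by one of the three types of lines in $A$. Concretely, a point $b$ lies on $\partial\iar(p)$ if and only if $b$ attracts $p$ but every neighbourhood of $b$ contains a beacon position that does not attract $p$ (or vice versa). I would begin by recalling the characterisation of successful attraction: $b$ attracts $p$ exactly when, following the attraction trajectory $\at(p,b)$, the point $p$ never gets stuck at a dead point, i.e.\ never enters a dead region introduced by some reflex vertex. By Lemma~\ref{lem:deadwedge}, the reflex vertices that can create such an obstruction for $b$ are precisely those whose deadwedge contains $b$. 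So the qualitative behaviour of the trajectory changes only when $b$ crosses the supporting line of some deadwedge (a type-1 line), or when the combinatorial structure of the trajectory itself — which edges of $P$ it slides along, and in which order — changes.

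Next I would argue that, fixing $b$ in some open cell of $A$ and perturbing it slightly, the trajectory $\at(p,b)$ deforms continuously and combinatorially rigidly: the sequence of pull edges and slide edges, the edges of $P$ on which the slides occur, and the set of reflex vertices whose deadwedges contain $b$ all stay the same. The events that can break this rigidity are: (i) $b$ crosses a deadwedge line — handled by type-1 lines; (ii) the first intersection of a ray $\overrightarrow{pb}$ (or of a slide segment $s$ introduced by a reflex vertex $r$, per the definition preceding Lemma~\ref{lem:deadwedge}) with $\partial P$ jumps from one edge of $P$ to an adjacent one, which happens exactly when $b$ lies on the supporting line of an edge of $P$ — handled by type-3 lines; and (iii) the destination of a stuck point, governed by the shortest-path tree $\spt(p)$ (recall $p$ reaches $b$ iff it does not get permanently trapped, and the geodesic structure from $p$ — which by Lemma~\ref{lem:moninc} controls how the trajectory progresses — changes topology only across $\spm(p)$, whose boundaries are edges of $P$ and extensions of edges of $\spt(p)$) — handled by type-2 lines. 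Thus within a cell of $A$ the predicate "$b$ attracts $p$" is constant, so $\iar(p)$ is a union of closed cells of $A$ and its boundary edges lie on the lines of $A$.

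The cleanest way to organise this is contrapositively: take an open segment $\sigma$ that lies in the interior of a single cell of $A$ (so $\sigma$ avoids all three line families), pick two points $b_1,b_2$ on $\sigma$, and show that $b_1$ attracts $p$ iff $b_2$ does, by continuously sliding the beacon from $b_1$ to $b_2$ along $\sigma$ and tracking the trajectory; since no type-1, type-2, or type-3 line is crossed, none of the three events above occurs, so the success/failure status cannot flip. Consequently a sign change of the predicate can occur only on a line of $A$, which is exactly the claim.

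The main obstacle I anticipate is item (iii): making precise the claim that the fate of a point that gets temporarily stuck on an edge — whether it slides off the end and continues toward $b$, or slides into a dead region — is governed only by crossings of $\spt(p)$-extensions and edges of $P$, and does not hide some additional type of critical line. This requires carefully using Lemma~\ref{lem:moninc} (monotonic increase of geodesic distance along the trajectory) together with the structure of $\spm(p)$ to show that the \emph{combinatorial} trajectory — and in particular which reflex vertices it rounds and in what order — is locally determined by which cell of $\spm(p)$ the relevant points lie in, and that these cells' boundaries are already among the type-2 and type-3 lines. Handling the interaction between successive slide/pull alternations (a change early in the trajectory can cascade) needs an inductive argument along the trajectory, which is where the bookkeeping will be heaviest.
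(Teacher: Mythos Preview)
This lemma is not proved in the paper; it is quoted from \cite{KRS15} and used as a black box. (The present paper does later re-derive a sharper form of it, since Theorem~\ref{thm:1} together with Corollary~\ref{cor:effective} pins down exactly which line of $A$ each piece of $\partial\iar(p)$ lies on.) So there is no in-paper argument to compare your plan against directly.

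Your overall strategy---show that the predicate ``$b$ attracts $p$'' is constant on each open cell of $A$, hence $\partial\iar(p)$ lies on lines of $A$---is the natural one, and items (i) and (ii) are handled correctly. But item (iii) has a genuine gap. You assert that the combinatorial type of $\at(p,b)$ (which reflex vertices it rounds, and in which order) is determined by which cell of $\spm(p)$ the beacon lies in, and you invoke Lemma~\ref{lem:moninc} as the link. Lemma~\ref{lem:moninc} only says that geodesic distance from $p$ increases along the trajectory; it does \emph{not} say that the trajectory's turning points coincide with, or are controlled by, the shortest-path structure from $p$ to $b$. Concretely: after a slide, the trajectory leaves some reflex vertex $r$ along the pull ray $\overrightarrow{rb}$; the edge of $\partial P$ this ray first hits changes exactly when $b$ crosses a line through $r$ and some reflex vertex $v$ visible from $r$. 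For that line to be of type~2 you need $\overline{rv}$ to be an edge of $\spt(p)$, \ie $v$ must be the parent or a child of $r$ in the tree---but a priori $v$ could be a sibling or a more distant relative, and then the line $\overline{rv}$ is not in $A$ at all. Ruling this out (roughly: any such $v$ either cannot separate $p$ from $b$, or the separation it causes is already witnessed by an edge of $\spt(p)$ closer to $p$) is the actual content of the lemma, and it is precisely what the constraining-half-plane analysis in Section~3 of the paper establishes. You correctly flag this as the hard step, but the appeal to $\spm(p)$ by itself does not close it.
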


\begin{figure}[tb]
\centering
\includegraphics[scale=0.9]{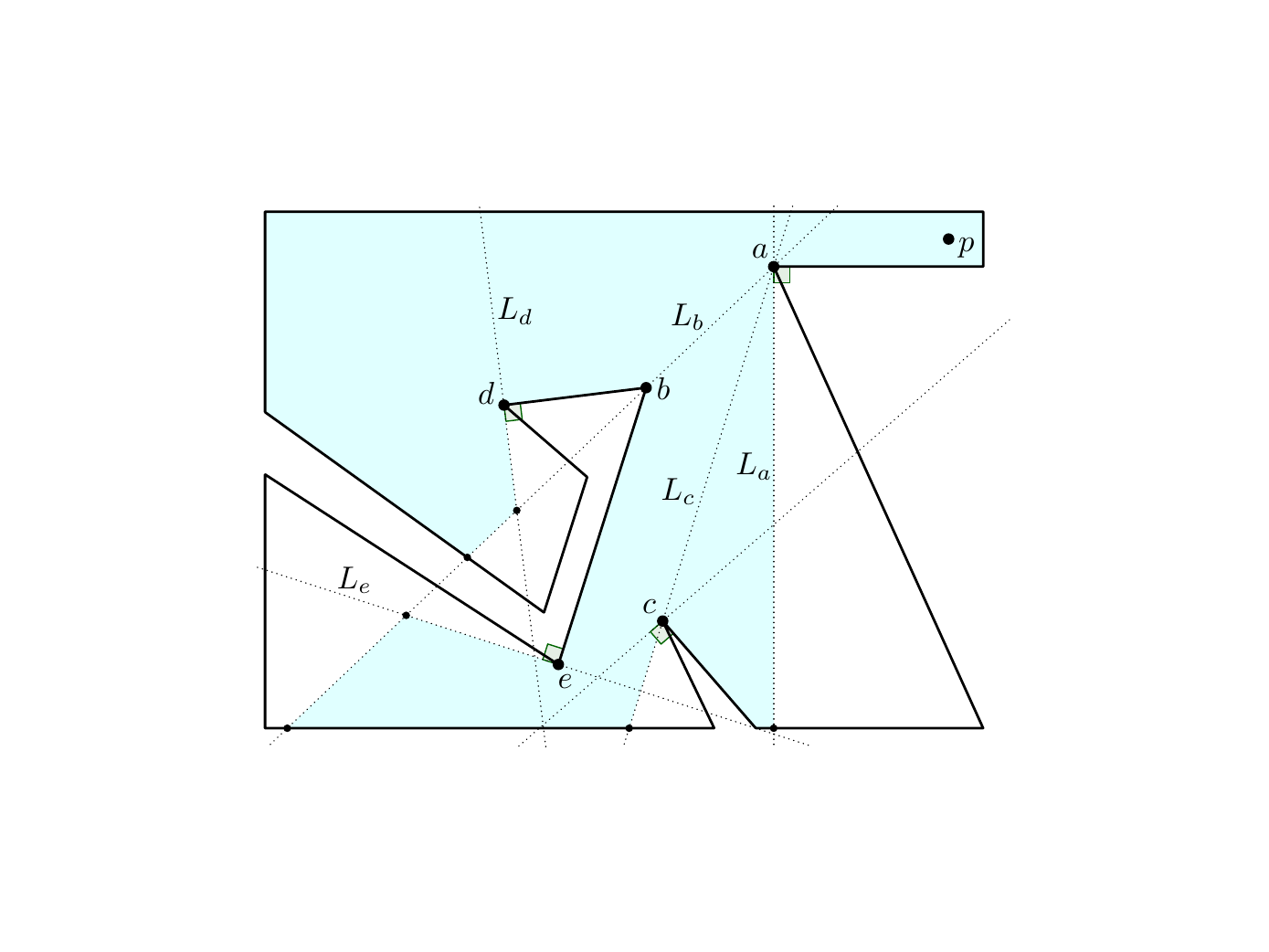}
\caption{\label{fig38} An example of an inverse attraction region with effective associated lines to each reflex vertex. Points in the colored region attract $p$. Here $L_a$, $L_b$, $L_c$, $L_d$ and $L_e$ are respectively the associated lines of the reflex vertices $a$, $b$, $c$, $d$ and $e$. }
\end{figure}

\noindent Let $\overline{uv}$ be an edge of $\spt(p)$, where $u = \parent(v)$. We associate three lines of the arrangement $A$ to $\overline{uv}$: supporting line of $\overline{uv}$ and the two supporting lines of the deadwedge of $v$.
By focusing on the edge $\overline{uv}$, we study the local effect of the reflex vertex $v$ on $\iar(p)$, and we show that:
\begin{enumerate}
\item Exactly one of the associated lines to $\overline{uv}$ may contribute to the boundary of $\iar(p)$. We call this line the \emph{effective associated line} of  $\overline{uv}$ (Figure~\ref{fig38}). 
\item The effect of $v$ on the inverse attraction region can be represented by at most two half-planes, which we call the \emph{constraining half-planes} of $\overline{uv}$. These half-planes are bounded by the effective associated line of $\overline{uv}$.
\item Each constraining half-plane has a \textit{domain}, which is a subpolygon of $P$ that it affects. The points of the constraining half-plane that are inside the domain subpolygon cannot attract $p$ (see the next section).
\end{enumerate}

\noindent Our algorithm to compute the inverse attraction region uses $\spm(p)$. For each region of $\spm(p)$, 
we compute the set of constraining half-planes with their domain subpolygons containing the region.
Then, we discard points of the region that cannot attract $p$ by locating points which belong to at least one of these constraining half-planes.

%
%
%
%
\subsection*{Constraining half-planes}\label{sec:3.1}
Let $\overline{uv}$ be an edge of $\spt(p)$, where $u = \parent(v)$. We extend $\overline{uv}$ from $u$ until we reach $w$, the first intersection with the boundary of $P$ (Figure~\ref{fig56}). Segment $\overline{uw}$ partitions $P$ into two subpolygons. Let $P_p$ be the subpolygon that contains $p$. 
Any path from $p$ to any point in $P\setminus P_p$ passes through $\overline{uw}$. Thus a beacon outside of $P_p$ that attracts $p$, must be able to attract at least one point on the line segment $\overline{uw}$. 
In order to determine the local attraction behaviour caused by the vertex $v$, and to find the effective line associated to $\overline{uv}$, we focus on the attraction pull on the points of $\overline{uw}$ (particularly the vertex $u$) rather than $p$. 
By doing so we detect points that cannot attract $u$, or any point on $\overline{uw}$, and mark them as points that cannot attract $p$.  
In other words, for each edge $\overline{uv} \in \spt(p)$ we detect a set of points in $P$ that cannot attract $u$ locally due to $v$. The attraction of these beacons either causes $u$ to move to a wrong subpolygon,  or their attraction cannot move $u$ past $v$ (see the following two cases for details). 
Later in Theorem~\ref{thm:1}, we show that this suffices to detect all points that cannot attract $p$.

\begin{figure}[b]
\centering
\includegraphics{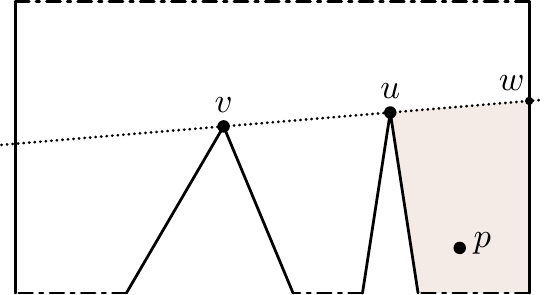}
\caption{\label{fig56} The possible locations of $p$ with respect to the shortest path edge $\overline{uv}$.
}
\end{figure}
%
%
%
%

Let $e_{1}$ and $e_{2}$ be the edges incident to $v$.
Let $H_{1}$ be the half-plane, defined by a line orthogonal to $e_{1}$ passing through $v$, which contains $e_{1}$, and let $H_{2}$ be the half-plane, defined by a line orthogonal to $e_{2}$ passing through $v$, which contains $e_{2}$.
Depending on whether $u$ is in $H_{1} \cup H_{2}$, we consider two cases:
%
%


\case{1} Vertex $u$ is not in $H_{1} \cup H_{2}$ (Figure~\ref{fig37}). 
%
We show that in this case the supporting line of $\overline{uv}$ is the only line associated to $v$ that may contribute to the boundary of $\iar(p)$, \ie it is the effective line associated to $\overline{uv}$. 
Let $q$ be an arbitrary point on the open edge $e_1$. 
As $u$ is not in $H_{1} \cup H_{2}$, the angle between the line segments $\overline{uq}$ and $\overline{qv}$ is less than $\pi /2$. 
Consider an arbitrary attraction trajectory that moves $u$ straight towards $q$. By Lemma~\ref{lem:angle}, any slide movement of this attraction trajectory on the edge $e_1$ moves away from $v$. 
Now consider $q$ to be on the edge $e_2$. Similarly any slide on the edge $e_2$ moves away from $v$. 
Thus, an attraction trajectory of $u$ can cross the line segment $\overline{uv}$ only once (the same holds for any other point on the line segment $\overline{uw}$). Note that this crossing movement  happens via a pull edge. We use this observation to detect a set of points that do not attract $u$ and thus do not attract $p$. 


Now consider the supporting line $L$ of the edge $\overline{uv}$. As $u$ is not in $H_{1} \cup H_{2}$, $L$ partitions the plane into two half-planes $L_{1}$ containing the edge $e_1$, and $L_{2}$ containing the edge $e_2$. 
Without loss of generality, assume that the parent of $u$ in $\spt(p)$ lies inside $L_{2}$ (refer to Figure~\ref{fig37}). 
Recall that $\overline{uw}$ partitions $P$ into two subpolygons, and $P_p$ is the subpolygon containing $p$. 
We define subpolygons $P_{1}$ and $P_{2}$ as follows.
Let $\rho_1$ be the ray originating at $v$, perpendicular to $L$ in $L_{1}$, and let $z_1$ be the first intersection point of $\rho_1$ with the boundary of $P$. Define $P_{1}$ as the subpolygon of $P$ induced by $\overline{vz_1}$ that contains the edge $e_1$.
Similarly, let $\rho_2$ be the ray originating at $v$, perpendicular to $L$ inside $L_{2}$, and let $z_2$ be the first intersection point of $\rho_2$ with the boundary of $P$. Define $P_{2}$ as the subpolygon of $P$ induced by $\overline{vz_2}$ that contains the edge $e_2$. 
%
\begin{lemma}
\label{lem02}
No point in $P_{1} \cap L_{2}$ can attract $p$. 
\end{lemma}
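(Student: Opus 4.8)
The plan is to argue by contradiction: assume some beacon $b\in P_1\cap L_2$ attracts $p$, fix the trajectory $\at(p,b)$, and follow it as it approaches the reflex vertex $v$, deriving a contradiction with the monotonicity lemmas.

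The first step is to locate $b$ relative to the two chords $\overline{uw}\subseteq L$ and $\overline{vz_1}$. The flap $P_1$ is cut off by $\overline{vz_1}$ on the $e_1$-side of $v$, and near $v$ this flap lies in the closed half-plane $\overline{L_{1}}$: at $v$ it is spanned by the direction of $e_1$ and the direction of $\rho_1$, both of which point into $L_1$ (indeed $\overrightarrow{vu}$ makes an angle exceeding $\pi/2$ with both edge directions, which is exactly the hypothesis $u\notin H_1\cup H_2$, so $\rho_1$ falls between $e_1$ and $\overline{vu}$). Hence a point of $P_1$ that lies strictly in $L_2$ sits in a connected component $C$ of $P\cap L_2$ with $C\subseteq P_1$, and $C$ is separated from $p$ — whose component of $P\cap L_2$ contains $\parent(u)$ — by the line $L$. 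Consequently any route from $p$ into $C$ must cross $L$ at a point of $\overline{P_1}\cap L$; near $v$ this set is just $\{v\}$, and the only other possibilities are far-away pockets where $\partial P$ re-enters $L_2$ inside $P_1$, which lie in $P\setminus P_p$ and are disposed of using that $\overline{uw}$ already bounds $P_p$ (so $\at(p,b)$ would have to cross $\overline{uw}$ first and then the discussion reduces to the attraction of a point of $\overline{uw}$, in particular of $u$). So the final portion of $\at(p,b)$ before it reaches $b$ must pass arbitrarily close to $v$.

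The second step combines the facts already established in Case~1 with Lemmas~\ref{lem:angle} and~\ref{lem:moninc}. Because $u\notin H_1\cup H_2$, every slide of $\at(p,b)$ along $e_1$ or along $e_2$ strictly increases the distance to $v$ (shown in the text preceding the lemma), and a trajectory crosses $\overline{uv}$ at most once, and only via a pull edge, so the moving point cannot oscillate between the $L_1$-side and the $L_2$-side of $L$ in a neighbourhood of $v$. Feeding this into Lemma~\ref{lem:moninc}: once the moving point has entered $\mathrm{int}(P_1)$ on the $e_1$-side, its geodesic distance to its start only grows, so it cannot ``back up'' across $\overline{vz_1}$; yet to reach $b\in C\subseteq L_2$ it must cross $L$ inside $\overline{P_1}$, i.e.\ essentially at $v$, which the single-crossing property of $\overline{uv}$ forbids. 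Equivalently, at the instant the trajectory would be forced to leave the $e_1$-side near $v$, its Euclidean distance to $b$ must increase, contradicting monotonicity of the attraction pull.

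The main obstacle I expect is the bookkeeping of which side of $L$ and of $\overline{vz_1}$ the moving point lies on across the alternation of pull and slide edges near $v$, and the correct handling of the degenerate configurations (the trajectory meeting $v$ exactly, $p$ lying in $P_1$ already, or $\partial P$ crossing $L$ inside $P_1$ far from $v$). The clean intermediate claim I would isolate is that, after the moving point first enters $\mathrm{int}(P_1)$, it stays in $\overline{P_1}$ on the $e_1$-side of $L$ until it reaches $b$; the contradiction is then immediate from $b\in L_2$ together with $\overline{P_1}\cap L_2$ being empty near $v$.
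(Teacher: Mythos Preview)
Your plan takes a substantially different route from the paper and, as written, has a real gap.

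The paper's proof is short and direct. It never tracks the trajectory of $p$ at all; instead it uses the chord $\overline{uw}$ as a gate: since $p\in P_p$ and $b\in P_1\subseteq P\setminus P_p$, any attraction trajectory must cross $\overline{uw}$, so it suffices to show that no point of $\overline{uw}$ is attracted by $b$. That is done by a two–case split on the position of $b$ relative to the perpendicular to $L$ at $v$. In one case, every point on the chord $\overline{vz_1}$ is pulled \emph{out} of $P_1$, so nothing outside $P_1$ can be attracted into it. In the other case, the straight pull from any point of $\overline{uw}$ heads toward the edge $e_2$, and the already-established fact that slides on $e_2$ move away from $v$ finishes the argument.

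Your approach instead tries to follow $\at(p,b)$ and pin it against the line $L$ near $v$. The problem is your ``clean intermediate claim'': that once the moving point enters $\mathrm{int}(P_1)$ it stays on the $e_1$-side of $L$. This is neither proved nor, in general, true. Inside $P_1$ the trajectory is governed only by the local geometry of $P$ and the position of $b$; since $b\in L_2$, a pull edge can perfectly well carry the moving point across $L$ from $L_1$ into $L_2$ while remaining in $P_1$, far from $v$. The single-crossing property you cite concerns the \emph{segment} $\overline{uv}$, not the whole line $L$, and Lemma~\ref{lem:moninc} (monotone geodesic distance from the start) does not forbid crossing $L$ either. Your treatment of the ``far-away pockets'' already implicitly reduces to attraction of a point on $\overline{uw}$, which is exactly the paper's reduction—but you then do not carry out that argument, and your subsequent trajectory-tracking does not substitute for it.

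In short: the reduction to $\overline{uw}$ is the right move (you almost make it), but after that you need the paper's barrier/case analysis rather than a geodesic-monotonicity argument about staying on one side of $L$.
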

\begin{proof}
Without loss of generality assume the position in Figure~\ref{fig37}. 
Consider a beacon $b_1$ in $P_{1} \cap L_{2}$. If $b_1$ is on or above the ray $Z_2$ then in the attraction of $b_1$ a point on $\rho_1$ will move away from $P_{1}$. Therefore, in this case $b_1$ does not attract any point outside of $P_{1}$ including $p$. 
Now if $b_1$ is below the ray $Z_2$ then any straight movement from $u$ to $b_1$ is towards the edge $e_2$ and therefore in the attraction of $b_1$, no point on $\overline{uw}$ can enter $P_{1}$ directly without sliding on $e_2$. 
As we explained earlier, any slide on the edge $e_2$ moves away from $v$, and therefore, $b_1$ cannot attract $u$. Similarly $b_1$ cannot attract any point on $\overline{uw}$.
As the attraction trajectory of $p$ towards $b_1$ must pass through $\overline{uw}$, $b_1$ cannot attract $p$.
\end{proof}
%
%

\begin{lemma}
\label{lem03}
No point in $P_{2} \cap L_{1}$ can attract $p$. 
\end{lemma}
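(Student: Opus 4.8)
Since the statement is precisely the mirror image of Lemma~\ref{lem02}, with the edges $e_1,e_2$, the half-planes $L_1,L_2$, the subpolygons $P_1,P_2$, and the rays $\rho_1,\rho_2$ interchanged, the plan is to replay the argument of Lemma~\ref{lem02} with these indices swapped. Fix an arbitrary beacon $b_2 \in P_2 \cap L_1$; the goal is to show that $b_2$ does not attract $p$. As in Case~1, the two facts we rely on are: any slide on $e_1$ or on $e_2$ moves away from $v$ (by Lemma~\ref{lem:angle}, since $u \notin H_1\cup H_2$), and consequently the attraction trajectory of $u$ — or of any point of $\overline{uw}$ — crosses the line $L$ supporting $\overline{uv}$ at most once, and only along a pull edge. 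We also note that $\overline{uw}\subseteq L$ splits $P$ into $P_p$ and the component containing $v$, and that $P_2$ lies in the latter; hence $p\notin P_2$, and every attraction trajectory from $p$ to $b_2$ must cross $\overline{uw}$. So it suffices to prove that $b_2$ attracts no point of $\overline{uw}$.

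The case distinction then mirrors that of Lemma~\ref{lem02}. In the first subcase, $b_2$ is positioned (relative to the perpendicular to $L$ at $v$) so that the pull of $b_2$ moves a point of the bounding chord $\overline{vz_2}$ of $P_2$ out of $P_2$; then no trajectory can enter $P_2$, so $b_2$ attracts nothing outside $P_2$, in particular not $p$. In the second subcase, since $b_2\in L_1$, any straight movement from a point of $\overline{uw}$ toward $b_2$ heads into the $L_1$ side, so the trajectory cannot enter $P_2$ without first sliding on $e_1$; but a slide on $e_1$ moves away from $v$, so $u$ (and every point of $\overline{uw}$) can never get past $v$, and $b_2$ cannot attract any point of $\overline{uw}$. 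In either subcase $b_2$ fails to attract $p$, as required. Figure~\ref{fig37} illustrates both subcases after the relabeling.

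The step I expect to be the main obstacle is not the geometric content — which is identical to Lemma~\ref{lem02} — but verifying that the interchange of indices is clean despite the asymmetric WLOG made earlier (the parent of $u$ lies in $L_2$, not $L_1$). Concretely, one must check that this assumption was used only to orient the picture and to conclude $p\notin P_1$, and that the analogous conclusion $p\notin P_2$ still holds here; and one must state the ``which side of the perpendicular at $v$'' condition for $b_2\in P_2\cap L_1$ correctly, since the chord $\overline{vz_2}$ now lies on the opposite side of $L$ from $b_2$, so the ``enter $P_2$'' analysis is slightly less immediate than the ``enter $P_1$'' analysis was. Once these bookkeeping points are settled, the proof is a verbatim transcription of the proof of Lemma~\ref{lem02}.
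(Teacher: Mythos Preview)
Your proposal is correct and follows essentially the same approach as the paper's own proof: a two-subcase argument mirroring Lemma~\ref{lem02}, where in the first subcase the chord $\overline{vz_2}$ blocks entry into $P_2$, and in the second subcase any pull from $\overline{uw}$ toward $b_2\in L_1$ forces a slide on $e_1$ that moves away from $v$. Your extra care about the asymmetric WLOG (parent of $u$ in $L_2$) is well placed but, as you suspected, that assumption only orients the figure and does not affect the argument here; the paper's proof simply proceeds without mentioning it.
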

\begin{proof}
Without loss of generality assume the position in Figure~\ref{fig37}. 
Consider a beacon $b_2$ in $P_{2} \cap L_{1}$. If $b_2$ is on or above the ray $Z_1$ then in the attraction of $b_2$ a point on $\rho_2$ will move away from $P_{2}$. Therefore, in this case $b_2$ does not attract any point outside of $P_{2}$ including $p$. 
Now if $b_2$ is below the ray $Z_1$ then, in the attraction of $b_2$, no points on $\overline{uw}$ can cross $\overline{uv}$ without sliding on $e_1$. 
As we explained earlier, any slide on the edge $e_1$ moves away from $v$.
Therefore, $b_2$ cannot attract $u$ or any point on $\overline{uw}$, and so it cannot attract $p$.
\end{proof}

\noindent In summary, in case 1, 
the effect of $\overline{uv}$ is expressed by two  half-planes: $L_{2}$, affecting the subpolygon $P_{1}$, and $L_{1}$, affecting the subpolygon $P_{2}$. We call $L_{1}$ and $L_{2}$ the \textit{constraining half-planes} of $\overline{uv}$, and we call $P_{1}$ and $P_{2}$ the \textit{domain} of the constraining half-planes $L_{2}$ and $L_{1}$, respectively. 
Furthermore, we call $P_{1} \cap L_{2}$ and $P_{2} \cap L_{1}$ the \textit{constraining regions} of $\overline{uv}$. 
Later we show that $L$ is the only effective line associated to $\overline{uv}$. 

\begin{figure}[t]
\begin{minipage}[t]{0.42\textwidth}
\centering
\includegraphics{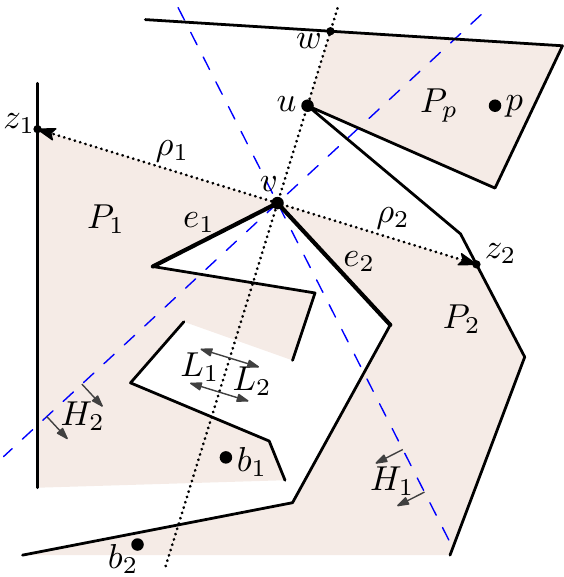}
\caption{\label{fig37} Vertex $u\not\in H_{1} \cup H_{2}$. Subpolygon $P_{2}$ is the domain of the constraining half-plane $H_{1}$, and $P_{1}$ is the domain of the constraining half-plane $H_{2}$.}
\end{minipage}
\hfill
\begin{minipage}[t]{0.53\textwidth}
\centering
\includegraphics{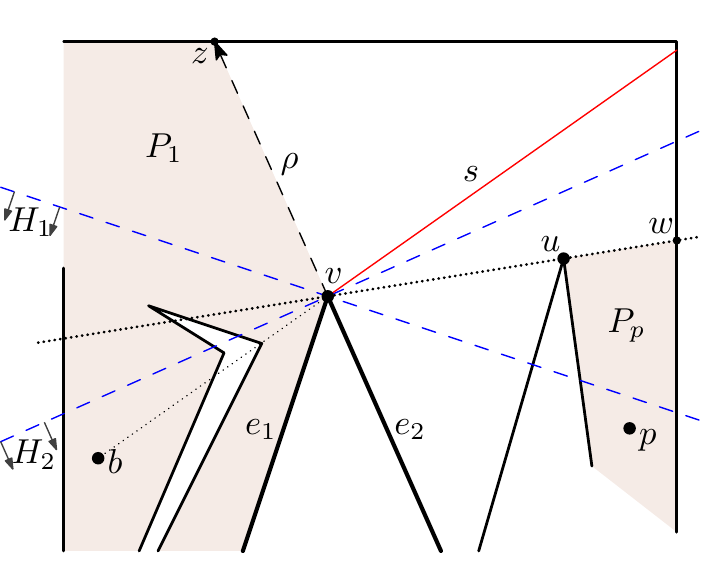}
\caption{\label{fig36} Vertex $u\in H_{1} \cup H_{2}$. Subpolygon $P_{1}$ is the domain of the constraining half-plane $H_{2}$.}
\end{minipage}
\end{figure}

\case{2} Vertex $u$ is in $H_{1} \cup H_{2}$ (refer to Figure~\ref{fig36}).
Without loss of generality assume $u$ can see part of the edge $e_2$.
Similar to the previous case, we define the subpolygon $P_p$;
let $w$ be the first intersection of the ray $\overrightarrow{vu}$ with the boundary of $P$.
Note that $\overline{uw}$ partitions $P$ into two subpolygons. Let $P_p$ be the subpolygon containing $p$. 
Now let $\rho$ be the ray originating at $v$, along the extension of edge $e_2$. Let $z$ be the first intersection of $\rho$ with the boundary of $P$. We use $P_{1}$ to denote the subpolygon induced by $\overline{vz}$ that contains $e_1$. We detect points in $P_{1}$ that cannot move $u$ (past $v$) into $P_{1}$.
%
%
%
%
\begin{lemma}
\label{lem01}
No point in $P_{1} \cap H_{2}$ can attract $p$.
\end{lemma}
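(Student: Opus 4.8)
The plan is to argue as in the proofs of Lemmas~\ref{lem02} and~\ref{lem03}: rather than reasoning about $p$ directly, I would show that no beacon $b\in P_{1}\cap H_{2}$ can move the vertex $u$ --- nor, by the identical argument, any point of the chord $\overline{uw}$ --- from $P\setminus P_{1}$ into $P_{1}$. Granting this, the lemma follows exactly as in Case~1: an attraction trajectory from $p$ to $b$ must leave $P_{p}$ through $\overline{uw}$, and then, since $b\in P_{1}$ while $u\notin P_{1}$, it would have to cross the chord $\overline{vz}$ bounding $P_{1}$, i.e.\ it would carry a point of $\overline{uw}$ into $P_{1}$, a contradiction. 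So everything reduces to showing that \emph{no attraction trajectory governed by $b\in P_{1}\cap H_{2}$ can move $u$ (or a point of $\overline{uw}$) into $P_{1}$}.

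Two geometric facts drive this. First, since $b\in H_{2}$, the orthogonal projection of $b$ onto the supporting line of $e_{2}$ lies on the side of $v$ containing $e_{2}$; hence every slide along $e_{2}$ under the pull of $b$, and --- by Lemma~\ref{lem:angle} --- every slide immediately following a pull edge that stops on $e_{2}$, is directed away from $v$. Second, the relative interior of $\overline{vz}$ is the only part of $\partial P_{1}$ that a trajectory can cross starting from outside $P_{1}$ (the remainder of $\partial P_{1}$ consists of edges of $P$), and it lies on the side of the line bounding $H_{2}$ that does not contain $e_{2}$; since a pull edge aimed at $b\in H_{2}$ and emanating from a point of $H_{2}$ stays inside the half-plane $H_{2}$, such a pull edge cannot reach $\overline{vz}$.

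I would then trace the trajectory of $u$. A slide edge lies on $\partial P$, so it cannot cross the relative interior of $\overline{vz}$; it cannot reach $v$ along $e_{2}$ by the first fact, and reaching $v$ along $e_{1}$ cannot be a \emph{first} entry into $P_{1}$ since $e_{1}\subset\partial P_{1}$. Hence the trajectory can enter $P_{1}$ only via a pull edge, whose source must then lie outside $H_{2}$, and so it suffices to prove that the trajectory of $u$ never leaves $H_{2}$. For pull edges this is the second fact; the crux --- and, I expect, the main obstacle --- is ruling out a \emph{slide} sequence that escapes $H_{2}$ and circles $v$. I would handle this with a two-way split on the position of $b$, mirroring Lemmas~\ref{lem02} and~\ref{lem03}: in one sub-case a point on $\rho$ moves away from $P_{1}$ under the pull of $b$, so $b$ attracts nothing outside $P_{1}$ and in particular not $u$ or $p$; in the other sub-case every straight motion from a point of $\overline{uw}$ toward $b$ is directed toward $e_{2}$, so such a point can enter $P_{1}$ only by first sliding on $e_{2}$, which the first fact forbids, while the monotonicity of the geodesic distance from $u$ (Lemma~\ref{lem:moninc}) keeps any intervening slide sequence on other edges from leaving $H_{2}$ and circling $v$.
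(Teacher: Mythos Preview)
Your approach is essentially the paper's: reduce to showing that no $b\in P_{1}\cap H_{2}$ can attract $u$ (or any point of $\overline{uw}$), split on the position of $b$ relative to the ray $\rho$, and in each sub-case rule out entry into $P_{1}$ by checking the direction of the pull edge and the slide on $e_{2}$. The paper's version is more concise---it uses that $b$ lies below the supporting line of $\overline{uv}$ (rather than the boundary of $H_{2}$) to dismiss the pull edge, and it does not invoke Lemma~\ref{lem:moninc} or worry about a slide sequence ``circling $v$''---but the decomposition and the key geometric facts are the same.
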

\begin{proof}
Without loss of generality assume the position in Figure~\ref{fig36}. 
Consider a beacon $b$ in $P_{1} \cap H_{2}$. If $b$ is on or to the right of the ray $\rho$ then in the attraction of $b$ a point on $\rho$ will move away from $P_{1}$. Therefore, in this case $b$ does not attract any point outside of $P_{1}$ including $p$. 
Now assume $b$ is to the left of the ray $\rho$.
As $b$ is in $H_{2}$ the orthogonal projection of $b$ on the supporting line of the edge $e_2$ also lies in $H_{2}$. Therefore, as $b$ is in $P_{1}$, it does not attract any point on the open edge $e_2$.
Consider the attraction trajectory of $u$ with respect to $b$.
As $b$ is below the supporting line of $\overline{uv}$, $u$ cannot enter $P_{1}$ via a pull edge.  
In addition, $u$ cannot slide on $e_2$ to reach $v$.
Therefore $b$ cannot attract $u$ (or similarly any point on $\overline{uw}$). Thus it does not attract $p$.
\end{proof}
%
%
%
%
\noindent In summary, in case 2,  the effect of $\overline{uv}$ on $\iar(p)$ can be expressed by the half-plane $H_{2}$.
We call $H_{2}$ the \textit{constraining half-plane} of $\overline{uv}$, $P_{1}$ the \textit{domain} of $H_{2}$ and we call $P_{1} \cap H_{2}$ the \textit{constraining region} of $\overline{uv}$. 
Later we show that the supporting line of $H_{2}$ is the only effective  line associated to $v$. 
%
%
%
By combining these two cases, we prove the following theorem.
\begin{theorem}
\label{thm:1}
A beacon $b$ can attract a point $p$ if and only if $b$ is not in a constraining region of any edge of $\spt(p)$.
\end{theorem}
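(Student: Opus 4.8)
The plan is to prove the two directions separately, leveraging the case analysis already carried out in Lemmas~\ref{lem02}, \ref{lem03} and~\ref{lem01}. The ``only if'' direction (if $b$ attracts $p$ then $b$ is not in any constraining region) is immediate from those three lemmas: each lemma states precisely that a beacon lying in the constraining region of some edge $\overline{uv}\in\spt(p)$ fails to attract $p$. Taking the contrapositive, any beacon attracting $p$ must avoid every constraining region. So the substance of the theorem is the ``if'' direction, and that is where the real work lies.

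For the ``if'' direction, I would argue by contradiction: suppose $b$ does not lie in any constraining region, yet $b$ fails to attract $p$. Then the attraction trajectory $\at(p,b)$ either terminates at a dead point or enters a dead region; by Lemma~\ref{lem:deadwedge}, dead points/split edges are introduced only by reflex vertices $r$ with $b\in\mathrm{deadwedge}(r)$. The key is to track the trajectory of $p$ using the shortest path tree: by Lemma~\ref{lem:moninc} the geodesic distance from $p$'s original location increases monotonically along $\at(p,b)$, so the trajectory ``makes progress'' through the regions of $\spm(p)$. I would walk along $\sp(p,b)$ reflex vertex by reflex vertex. Let $v$ be the first reflex vertex on $\sp(p,b)$ (with parent $u$) that the moving point fails to ``round'' correctly — either the trajectory never crosses $\overline{uv}$ into the correct subpolygon, or it crosses but then slides back, or it gets stuck on an edge incident to $v$. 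In each situation I want to show $b$ lies in the constraining region of $\overline{uv}$, contradicting the hypothesis. This is exactly the local behaviour that the two-case analysis (Case~1: $u\notin H_1\cup H_2$; Case~2: $u\in H_1\cup H_2$) was set up to capture: in Case~1 the trajectory through $\overline{uw}$ can cross the line $L$ only once (via a pull edge), forcing $b$ into $P_1\cap L_2$ or $P_2\cap L_1$; in Case~2 failing to move $u$ past $v$ forces $b\in P_1\cap H_2$.

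The main obstacle — and the step I expect to require the most care — is the reduction from ``$b$ fails to attract $p$'' to ``$b$ fails, \emph{locally}, to attract the vertex $u$ of some $\spt(p)$-edge through the window $\overline{uw}$.'' The earlier discussion argues that any trajectory of $p$ toward a beacon outside $P_p$ must pass through $\overline{uw}$, hence $b$ must be able to attract \emph{some} point of $\overline{uw}$ (in particular we focus on $u$). But one must verify that the converse packaging is sound: that if $b$ does attract every relevant window-point for every $\spt(p)$-edge separating $b$ from $p$, then stitching these local successes together yields a globally successful trajectory for $p$ — i.e.\ that the monotone geodesic progress cannot stall at a configuration not witnessed by any single reflex vertex of $\spt(p)$. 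I would handle this by an induction on the sequence of $\spt(p)$-edges crossed by $\sp(p,b)$, ordered by geodesic distance from $p$: once the moving point has correctly rounded $v_1,\dots,v_{i}$ it lies in the subpolygon $P_p^{(i+1)}$ cut off by the next window $\overline{u_{i+1}w_{i+1}}$, its geodesic distance has strictly increased, and the hypothesis (no constraining region violated at $\overline{u_{i+1}v_{i+1}}$) together with Lemma~\ref{lem:moninc} guarantees it rounds $v_{i+1}$ as well; after the last window the point is in $P_p$-restricted-to-$b$'s cell of $\spm(p)$ and sees $b$ along a direct pull, hence reaches it. The delicate point throughout is ruling out dead points introduced by reflex vertices \emph{not} on $\sp(p,b)$ — here one uses that such a vertex $r$ would have to satisfy $b\in\mathrm{deadwedge}(r)$ and that the split edge it introduces cannot block a trajectory which is, by the monotonicity lemma, always increasing its geodesic distance toward $b$.
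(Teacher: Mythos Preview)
Your forward direction is fine and matches the paper. The reverse direction, however, has a real gap, and it is precisely the point you yourself flag as ``delicate'' at the end.

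Your induction walks along the reflex vertices of $\sp(p,b)$. But the reflex vertex whose split edge actually separates $p$ from $b$ need \emph{not} lie on $\sp(p,b)$: the attraction trajectory $\at(p,b)$ can wander into portions of $P$ that the geodesic never visits and stall at a reflex vertex there. Your proposed patch---that Lemma~\ref{lem:moninc} forbids an off-geodesic split edge from blocking the trajectory---does not follow. That lemma says the geodesic distance from $p$'s starting location grows monotonically; it says nothing about progress toward $b$, and in particular it does not rule out terminating at a dead point whose geodesic distance from $p$ exceeds every window you have crossed yet still falls short of $b$. (You also write ``geodesic distance toward $b$,'' which is not what the lemma gives you.) So the inductive stitching of local successes across the windows of $\sp(p,b)$ is not enough: the trajectory could pass every such window and then die at some unrelated reflex vertex.

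The paper sidesteps this entirely by arguing from $b$'s side rather than $p$'s. It takes the unique separation edge $s$ of $\ar(b)$ with $p$ and $b$ on opposite sides, lets $v$ be the reflex vertex introducing $s$, and sets $u=\parent(v)$ \emph{in $\spt(p)$}---with no assumption that $v$ lies on $\sp(p,b)$. A short argument shows $\sp(p,u)$ cannot cross $s$ (else $u$ would not be $v$'s parent), so $u$ lies on $p$'s side of $s$ and hence $b$ fails to attract $u$. Since $b\in\mathrm{deadwedge}(v)$ and $b$ attracts $v$ (by connectedness of $\ar(b)$), the same Case~1/Case~2 analysis you invoke then places $b$ inside the constraining region of $\overline{uv}$. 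No induction, no tracking of the trajectory through successive windows, and no worry about off-geodesic reflex vertices---because the argument begins at the one that matters.
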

\begin{proof}
By Lemmas \ref{lem02}, \ref{lem03} and \ref{lem01}, if $b$ is in the constraining region of an edge $\overline{uv} \in \spt(p)$ then it does not attract $p$. 


Now let $b$ be a point that cannot attract $p$. We will show that $b$ is in the constraining region of at least one edge of $\spt(p)$. 
Let $s$ be the separation edge of $\ar(b)$ such that $b$ and $p$ are in different subpolygons induced by $s$ (see, for example, Figure~\ref{fig36}). Note that as the attraction region of a beacon is connected~\cite{B13}, there is exactly one such separation edge. Let $v$ be the reflex vertex that introduces $s$ and let $u$ be the parent of $v$ in $\spt(p)$. By Lemma~\ref{lem:deadwedge}, $b$ is in the deadwedge of $v$. In addition, as the attraction region of a beacon is connected, $b$ attracts $v$. We claim that $b$ is  in a constraining region of the edge $\overline{uv} \in \spt(p)$.
First, we show that $b$ cannot attract $u$.
Consider $\sp(p,u)$, the shortest path from $p$ to $u$. If $\sp(p,u)$ crosses $s$ at some point $q$ then $u$ cannot be the parent of $v$ in $\spt(p)$, because we can reach $v$ with a shorter path by following $\sp(p,u)$ from $p$ to $q$ and then reaching $v$ from $q$. 
Therefore, $\sp(p,u)$ does not cross $s$, so $p$ and $u$ are in the same subpolygon of $P$ induced by $s$. As $b$ does not attract $p$, we conclude that $b$ does not attract $u$.

Now depending on the relative position of $u$ and $v$ (whether $u$ is in $H_{1} \cup H_{2}$ or not), we consider two cases.  
We show that in each case, $b$ is in a constraining region of $\overline{uv}$.

\case{1} Vertex $u$ is not in $H_{1} \cup H_{2}$ (refer to Figure~\ref{fig37}). 
Let $L$ be the supporting line of $\overline{uv}$, and similar to the previous case analysis let $L_{1}$ and $L_{2}$ be the constraining half-planes, and let $P_{1}$ and $P_{2}$ be the domains of $L_{2}$ and $L_{1}$, respectively. 
Without loss of generality, assume that $b$ is in the half-plane $L_{2}$. 
We show that then $b$ belongs to $P_{1}$.

%

As $b \in L_{2}$, the separation edge $s$ extends from $v$ into $L_{1}$, \ie $s\in L_{1}$. Then the point $p$ and subpolygon $P_{2}$ lie on one side of $s$, and subpolygon $P_{1}$ lies on the other side of $s$. As beacon $b$ does not attract $p$, the point $p$ and the beacon $b$ lie on different sides of $s$, and thus the beacon $b$ and subpolygon $P_{1}$ lie on the same side of $s$.

We will show now that indeed $b\in P_{1}$. Beacon $b$ attracts $v$ and is in the deadwedge of $v$. Thus, in the attraction of $b$, $v$ will enter $P_{1}$ via a slide move. We claim that $v$ cannot leave $P_{1}$ afterwards. 
Consider the supporting line of $\rho_1$ which is a line orthogonal to $\overline{uv}$ at $v$. As $u$ is not in $H_{1} \cup H_{2}$, and the deadwedge of $v$ is equal to $H_{1} \cap H_{2}$, the deadwedge of $v$ completely lies to one side of the supporting line. Therefore, in the attraction of $v$ by any beacon inside the deadwedge of $v$, any point $q\not=v$ on $\overline{vz_{1}}$ moves straight towards the beacon along the ray $\overrightarrow{q b}$. In other words, in the attraction pull of $b$ no point inside $P_{1}$ can leave $P_{1}$. Therefore, $b \in P_{1}$ and thus $b \in P_{1} \cap L_{2}$. By definition, $b$ belongs to a constraining region of $\overline{uv}$.

\case{2} Vertex $u$ is in $H_{1} \cup H_{2}$ (refer to Figure~\ref{fig36}). 
Without loss of generality let $u\in H_{2}$. Consider the separation edge $s$. As the beacon $b$ does not attract $u$, they lie on the opposite sides of $s$. 
As $b$ is in the deadwedge of $v$, it is also in $H_{2}$, the constraining half-plane of $\overline{uv}$. Similar to the previous case, as $b$ attracts $v$, $\at(v,b)$ never crosses $\rho$ to leave $P_{1}$ and therefore, $b$ is in $P_{1}$. 
Thus, $b \in P_{1} \cap H_{2}$ and it belongs to the constraining region of $\overline{uv}$.
\end{proof}
%
%
%
\begin{corollary}
\label{cor:effective}
Consider the edge $\overline{uv} \in \spt(p)$. If $u$ is not in $H_{1} \cup H_{2}$ (case 1), then among three associated lines to $\overline{uv}$ only the supporting line of $\overline{uv}$ may contribute to the boundary of $\iar(p)$.
If $u$ is  in $H_{1} \cup H_{2}$ (case 2), then among three associated lines to $\overline{uv}$ only the supporting line of $H_{2}$ may contribute to the boundary of $\iar(p)$, where $H_{2}$ is the half-plane orthogonal to the incident edge of $v$ that $u$ can partially see.
\end{corollary}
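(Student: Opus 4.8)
The plan is to deduce the corollary directly from Theorem~\ref{thm:1} together with the explicit descriptions of the constraining regions obtained in Cases~1 and~2. Theorem~\ref{thm:1} states that the set of beacons failing to attract $p$ equals the union, over all edges $\overline{uv}$ of $\spt(p)$, of the constraining regions of $\overline{uv}$; equivalently, $\iar(p)$ is $P$ with all these (closed) constraining regions removed. Since $\partial(A\setminus B)\subseteq\partial A\cup\partial B$ and the boundary of a finite union is contained in the union of the boundaries, every boundary edge of $\iar(p)$ lies on $\partial P$ or on the boundary of one of the constraining regions. Thus, fixing an edge $\overline{uv}$ of $\spt(p)$, it suffices to determine which of the three lines associated to $\overline{uv}$ can carry a sub-segment of the boundary of a constraining region \emph{of $\overline{uv}$}; an incidental coincidence of an associated line of $\overline{uv}$ with a boundary piece contributed by a different edge of $\spt(p)$ is, by convention, ascribed to that other edge.

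First I would handle Case~1, where the constraining regions of $\overline{uv}$ are $P_{1}\cap L_{2}$ and $P_{2}\cap L_{1}$. The boundary of $P_{1}\cap L_{2}$ consists of sub-segments of three kinds of lines: (i) the line $L$ bounding the half-plane $L_{2}$, which is exactly the supporting line of $\overline{uv}$; (ii) the segment $\overline{vz_{1}}$ bounding $P_{1}$, which lies on the line through $v$ orthogonal to $L$; and (iii) edges of $P$. The boundary of $P_{2}\cap L_{1}$ is the same with $\overline{vz_{2}}$ replacing $\overline{vz_{1}}$. Of the three lines associated to $\overline{uv}$ --- the supporting line of $\overline{uv}$ and the two deadwedge lines of $v$, which are orthogonal to the incident edges $e_{1}$ and $e_{2}$ at $v$ --- only $L$ appears among (i)--(iii): the line orthogonal to $L$ at $v$ is not orthogonal to $e_{1}$ or $e_{2}$ (barring a degeneracy), and the edges of $P$ in (iii) lie on supporting lines of polygon edges, not on deadwedge lines of $v$. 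Hence in Case~1 neither deadwedge line of $v$ can contribute to $\partial\iar(p)$ through $\overline{uv}$, so $L$ is the only effective associated line.

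Next I would handle Case~2, where the single constraining region of $\overline{uv}$ is $P_{1}\cap H_{2}$. Its boundary consists of sub-segments of (i) the supporting line of $H_{2}$, which is the deadwedge line of $v$ orthogonal to $e_{2}$; (ii) the segment $\overline{vz}$ bounding $P_{1}$, which lies on the supporting line of the polygon edge $e_{2}$; and (iii) edges of $P$. Among the three lines associated to $\overline{uv}$ --- the supporting line of $\overline{uv}$ and the supporting lines of $H_{1}$ and of $H_{2}$ --- only the supporting line of $H_{2}$ occurs here, since neither the supporting line of $\overline{uv}$ nor that of $H_{1}$ equals the supporting line of $e_{2}$ (again barring a degeneracy, using that $e_{1}$ and $e_{2}$ are not parallel at a reflex vertex). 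This is precisely the claim, with $H_{2}$ the half-plane orthogonal to the incident edge of $v$ that $u$ partially sees.

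The main obstacle is the careful bookkeeping of the boundary of each constraining region: one must verify that the auxiliary portions of those boundaries --- the perpendicular segments $\overline{vz_{1}},\overline{vz_{2}}$ in Case~1, the extension $\overline{vz}$ of $e_{2}$ in Case~2, and the polygon edges --- are, as lines, distinct from the ``forbidden'' associated lines of $\overline{uv}$, together with phrasing the statement so that any coincidence of an associated line of $\overline{uv}$ with a boundary piece belonging to another edge of $\spt(p)$ is attributed to that other edge. Degenerate configurations (for instance $\overline{uv}$ parallel to $e_{1}$ or $e_{2}$, which can align an orthogonal line with a deadwedge line) are dealt with by an infinitesimal perturbation of $p$. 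Everything else is routine once Theorem~\ref{thm:1} is in hand.
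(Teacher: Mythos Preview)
Your proposal is correct and follows the same route the paper takes: the corollary is stated there without a separate proof, being an immediate consequence of Theorem~\ref{thm:1} together with the explicit description of the constraining regions in the two cases. One small simplification you can make: in each case the cut segment bounding the domain subpolygon ($\overline{vz_1}$, $\overline{vz_2}$ in Case~1, $\overline{vz}$ in Case~2) lies entirely in the \emph{complement} of the relevant constraining half-plane, so your item~(ii) never actually contributes to the boundary of the constraining region and the degeneracy discussion around it is unnecessary.
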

\section{The complexity of the inverse attraction region}
In this section we show that in a simple polygon $P$ the complexity of $\iar(p)$  is linear with respect to the size of $P$. 

We classify the vertices of the inverse attraction region into two groups: 1) vertices that are on the boundary of $P$, and 2) internal vertices. We claim that there are at most a linear number of vertices in each group. Throughout this section, without loss of generality, we assume that no two constraining half-planes of different edges of the shortest path tree are co-linear. Note that we can reach such a configuration with a small perturbation of the input points, which may just add to the number of vertices of $\iar(p)$.

Biro~\cite{B13} showed that the inverse attraction region of a point in a simple polygon $P$ is convex with respect to $P$.\footnote{A subpolygon $Q \subseteq P$ is \textit{convex with respect to} the polygon $P$ if the line segment connecting two arbitrary points of $Q$ either completely lies in $Q$ or intersects $P$.} 
Therefore, we have at most two vertices of $\iar(p)$ on each edge of $P$, and thus there are at most a linear number of vertices in the first group.

We use the following property of the attraction trajectory to count the number of vertices in group 2. 
%

\begin{lemma}
\label{lem:ober}
Let $L$ be the effective line associated to the edge $\overline{uv} \in \spt(p)$, where $u = \parent(v)$. Let $b$ be a beacon on 
$L \cap deadwedge(v)$ 
that attracts $p$. Then the attraction trajectory of $p$ passes through both $u$ and $v$. 
\end{lemma}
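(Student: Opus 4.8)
The plan is to argue that if the beacon $b$ lies on the effective line $L$ associated to $\overline{uv}$ and inside $\mathrm{deadwedge}(v)$, then $v$ introduces a split edge $s$ for $b$ (by Lemma~\ref{lem:deadwedge}), and this split edge, together with the fact that $b$ attracts $p$, pins down the topology of the attraction trajectory of $p$ near $u$ and $v$. First I would recall the setup of the two cases from Section~\ref{sec:3.1}: in Case~1, $L$ is the supporting line of $\overline{uv}$; in Case~2, $L$ is the supporting line of $H_2$, the half-plane orthogonal to the incident edge of $v$ that $u$ partially sees. I would handle the two cases separately, but the common engine is the same: since $b$ attracts $p$ and $p$, $u$ lie on the same side of the separation edge $s$ of $\ar(b)$ (this is exactly the argument in the proof of Theorem~\ref{thm:1}: if $\sp(p,u)$ crossed $s$, then $u$ could not be the parent of $v$ in $\spt(p)$), the beacon $b$ must attract $u$ as well, hence the attraction trajectory $\at(p,b)$ passes through $u$ — or more precisely, since $b$ attracts $p$, the trajectory $\at(p,b)$ reaches $b$, and I need to show it visits $u$ and $v$ along the way.

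The key step is to show the trajectory visits $u$. Here I would use Lemma~\ref{lem:moninc}: the shortest-path distance from $p$ along $\at(p,b)$ is monotonically increasing, and the geodesic $\sp(p,b)$ necessarily passes through $v$ (since $b$ is in $\mathrm{deadwedge}(v)$ on the far side of $\overline{uv}$, any path from $p$ to $b$ must go around $v$, whose parent in the pruned shortest path tree is $u$). Combined with the structure of split edges: $b$ being on $L$ and in $\mathrm{deadwedge}(v)$ forces the split edge $s$ introduced by $v$ to separate $P_1$ (or $P_2$) from $p$ in precisely the way described in Section~\ref{sec:3.1}, so for $p$ to escape to $b$ the trajectory has no choice but to round the vertex $v$, and to round $v$ it must first reach a point from which $v$ is the greedy next target — which, by the analysis of slide edges on $e_1,e_2$ (any slide on the incident edges of $v$ moves away from $v$ when $u\notin H_1\cup H_2$, resp. the corresponding statement in Case~2), forces the trajectory to pass through $u$ first and then slide or pull to $v$. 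I would make this precise by tracking the last reflex vertex of $\sp(p,b)$ before $v$, showing it equals $u$, and then invoking that the attraction trajectory and the shortest path share this vertex.

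The main obstacle I anticipate is ruling out the possibility that the trajectory reaches $v$ (or bypasses the neighborhood of $\overline{uv}$ entirely) without ever touching $u$ — for instance, arriving at $v$ via a slide edge coming from the ``wrong'' direction, or pulling directly to $b$ from a point strictly between $u$ and $v$ on $\overline{uw}$. To close this, I would exploit that $b$ lies \emph{exactly} on $L$ (not merely in the constraining half-plane): this degeneracy means the pull direction toward $b$ from any point of $\overline{uv}$ is along $L$ itself, so the trajectory cannot cross $\overline{uv}$ transversally near $v$; it must either slide along $\overline{uv}$ toward $u$ or already be at $u$. Handling the boundary behavior carefully when $b$ is at the extreme of $\mathrm{deadwedge}(v)$ (on the ray $\overrightarrow{vu}$) is the delicate sub-case, and I would dispatch it by noting that there $v$ introduces no proper split edge (again Lemma~\ref{lem:deadwedge}), so $b$ attracting $p$ directly forces $\at(p,b)$ to follow the geodesic through $u$ and $v$ by the monotonicity of Lemma~\ref{lem:moninc}. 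Assembling these observations yields that $\at(p,b)$ passes through both $u$ and $v$.
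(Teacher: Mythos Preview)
Your proposal has a genuine gap: it never isolates the one geometric fact that makes the lemma almost immediate, namely that the window $\overline{uw}$ (the extension of $\overline{uv}$ beyond $u$ to the boundary of $P$) separates $p$ from $b$, so $\at(p,b)$ must cross $\overline{uw}$ at some point $q$. The paper's proof is then a one-line computation in each case. In Case~1, $L$ is the supporting line of $\overline{uv}$, hence also of $\overline{uw}$; since $b\in L$ and lies beyond $v$ (because $b\in\mathrm{deadwedge}(v)$ while $u\notin H_1\cup H_2$), the pull on $q$ is straight along $L$ toward $b$, so $q$ moves through $u$ and then $v$. In Case~2, $L$ bounds $H_2$, so $b$ lies on the ``wrong'' side of the line through $\overline{uw}$; therefore no pull edge from a point of $\overline{uw}$ can cross into the subpolygon containing $b$, and the trajectory must reach $u$ by sliding along a polygon edge and then slide along $e_2$ to $v$.

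Because you never work with $\overline{uw}$, your argument drifts. The appeal to Lemma~\ref{lem:moninc} and to $\sp(p,b)$ passing through $v$ does not by itself force the \emph{attraction trajectory} (a different curve) to visit $u$ or $v$. More seriously, the sentence ``it must either slide along $\overline{uv}$ toward $u$ or already be at $u$'' is doubly wrong: $\overline{uv}$ is an edge of $\spt(p)$, not of $P$, so there is no sliding on it; and even on the line $L$ the pull direction from a point of $\overline{uv}$ is toward $b$, i.e.\ toward $v$, not toward $u$. Your Case~2 discussion never produces a mechanism for reaching $u$ at all. Replacing your split-edge/monotonicity scaffolding with the simple ``the trajectory must cross $\overline{uw}$, and $b\in L$ determines what happens at the crossing point'' argument closes the gap.
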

\begin{proof}
Consider the two cases in Section 3.1 (Figure~\ref{fig37} and Figure~\ref{fig36}). 
Recall that $w$ is the first intersection of the vector $\overrightarrow{vu}$ with the boundary of $P$, and
cutting through the line segment $\overline{uw}$ partitions $P$ into two subpolygons such that $b$ and $p$ are in different subpolygons. 
And thus $\at(p,b)$ passes through $\overline{uw}$.
In case 1 (Figure~\ref{fig37}), as $L$ is  the supporting line of $\overline{uv}$, in the attraction pull of $b$, a point on $\overline{uw}$ moves along the line segment $\overline{vw}$ and meets both $u$ and $v$.
In case 2 (Figure~\ref{fig36}), as $b$ is on $L$, it is below the supporting line of $\overline{vw}$ and therefore, $\at(p,b)$ can pass $\overline{uw}$ and $\overline{uv}$ only through $u$ and $v$ via a slide edge, respectively. 
\end{proof}

\noindent Next we define an ordering on the constraining half-planes. Let $C$ be a constraining half-plane of the edge $\overline{uv} \in \spt(p)$ ($u = \parent(v)$), and let $C'$ be a constraining half-plane of the edge $\overline{u'v'} \in \spt(p)$ ($u' = \parent(v')$). 
We say $C \leq C'$ if and only if $\left |  \sp(p,v) \right | \leq \left | \sp(p,v') \right |$ (refer to Figure~\ref{fig53}). 
\begin{figure}[t]
\begin{minipage}[t]{0.56\textwidth}
\centering
\includegraphics{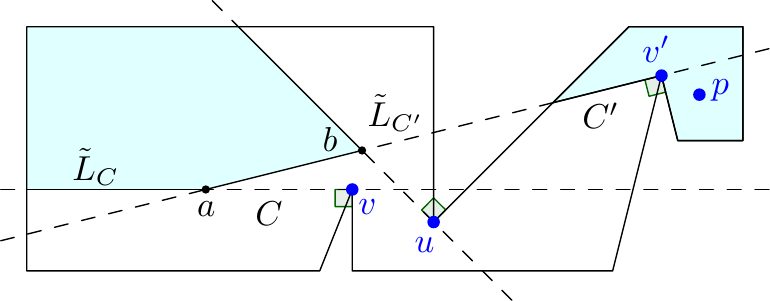}
\caption{\label{fig53} The charging scheme: charge vertex $a$ to the constraining half-plane $C$ of the vertex $v$. The inverse attraction region of $p$ is the shaded region.}
\end{minipage}
\hfill
\begin{minipage}[t]{0.4\textwidth}
\centering
\includegraphics[scale=0.9]{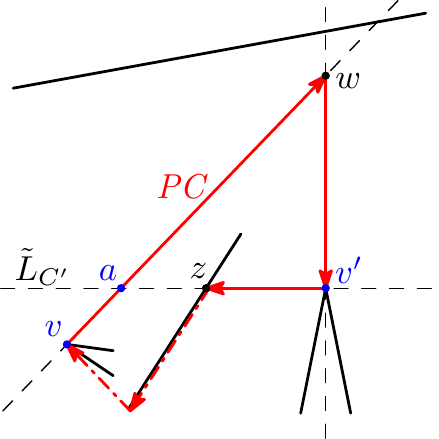}
\caption{\label{fig57} Assumptions of Lemma~\ref{lem:charge}. The chain $PC$ is shown in red.}
\end{minipage}
\end{figure}

We use a charging scheme to count the number of internal vertices. An internal vertex resulting from the intersection of two constraining half-planes $C$ and $C'$ is charged to $C'$ if $C \leq C'$, otherwise it is charged to $C$. In the remaining of this section, we show that each constraining half-plane is charged at most twice. 
Let $P_C$ and $P_C'$ denote the constraining regions related to $C$ and $C'$, respectively. 
And let $L_C$ and $L_{C'}$ denote the supporting lines of $C$ and $C'$, respectively. 
In the previous section we showed that the line segments $L_C \cap P_C$ are the only parts of $L_C$ that may contribute to the boundary of $\iar(p)$. Let $s \in L_C \cap P_C$ be a segment outside of the deadwedge of $v$. 
The next lemma shows that $s$ does not appear on the boundary of $\iar(p)$, and we can ignore $s$ when counting the internal vertices of $\iar(p)$.

\begin{lemma}
\label{lem:no-deadwedge}
Let $s \in L_C \cap P_C$ be a segment outside of the deadwedge of $v$. Then $s$ (or a part of $s$ with a non-zero length) does not appear on the boundary of $\iar(p)$. 
\end{lemma}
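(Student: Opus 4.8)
The plan is to argue by contradiction: suppose some subsegment $s'\subseteq s$ of positive length lies on $\partial\iar(p)$, and, shrinking $s'$ if necessary, assume that $s'$ is contained in the relative interior of a component of $L_C\cap P$, is disjoint from the closed set $\mathrm{deadwedge}(v)$, and meets no line of the arrangement $A$ other than $L_C$.

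First I would dispose of one side of $L_C$ along $s'$. Since $s'$ avoids the part of $\partial P_C$ lying on $\partial P$, and the only other part of $\partial P_C$ — the cut of the domain subpolygon by a segment issuing from $v$ — meets $L_C$ only at the point $v$, a one-sided neighbourhood $U^-$ of $s'$ lying strictly inside the constraining half-plane $C$ is contained in the constraining region $P_C$. The proofs of Lemmas~\ref{lem02}, \ref{lem03} and~\ref{lem01} only use that the beacon is strictly on the $e$-side of $L_C$ (for a beacon on $L_C$ itself the slide on the incident edge reaches the endpoint $v$ and the arguments break), so they apply to $U^-$ and show no beacon of $U^-$ attracts $p$. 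Because $s'\subseteq\partial\iar(p)$, beacons attracting $p$ must then accumulate on $s'$ from the opposite, open side of $L_C$; fix such a beacon $b_1$, taken close enough to $s'$ to be outside $\mathrm{deadwedge}(v)$ (and, in Case~1, inside the relevant domain $P_1$).

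The heart of the argument is to show that such a $b_1$ cannot in fact attract $p$, which is the contradiction. Here I would use the geometry established in Section~\ref{sec:3.1}: because $u\notin H_1\cup H_2$ (Case~1), resp.\ $u\in H_2$ (Case~2), the part of $L_C$ lying outside $\mathrm{deadwedge}(v)=H_1\cap H_2$ is a single ray issuing from $v$ on the side of $v$ towards $u$. Hence $b_1$ lies on that side, so the straight part of the attraction trajectory of $u$ toward $b_1$ is directed away from $v$ (along, or almost along, $L_C$); and since $b_1\notin\mathrm{deadwedge}(v)$, Lemma~\ref{lem:deadwedge} shows $v$ introduces no split edge for $b_1$, so whenever a point subsequently meets an edge incident to $v$ it slides — by Lemma~\ref{lem:angle} — monotonically away from $v$ and never returns. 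Re-running the reasoning of Lemmas~\ref{lem02}--\ref{lem01} on this side, the vertex $u$, and likewise every point of $\overline{uw}$, can never cross $\overline{uv}$ under the pull of $b_1$, so $b_1$ attracts no point of $\overline{uw}$. But any attraction trajectory of $p$ toward a beacon outside $P_p$ crosses $\overline{uw}$, so $b_1$ does not attract $p$, contradicting the choice of $b_1$ and finishing the proof.

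The hard part will be this last paragraph: turning "$s'$ lies outside $\mathrm{deadwedge}(v)$" into the statement that beacons near $s'$ on the attracting side push the whole segment $\overline{uw}$ away from $v$. That requires combining the non-existence of a split edge at $v$ (Lemma~\ref{lem:deadwedge}) with the slide-angle bound (Lemma~\ref{lem:angle}) and a short case analysis mirroring the proofs of Lemmas~\ref{lem02}--\ref{lem01} but carried out on the side of $L_C$ those lemmas do not cover; one also has to be slightly careful in Case~2, where $u$ need not lie on $L_C$ and "towards $u$" must be read through $H_2$.
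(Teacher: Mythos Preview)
Your proposal has a genuine gap in the final paragraph, precisely the part you flag as hard. You aim to derive a contradiction by showing that the beacon $b_1$, chosen on the \emph{non}-constraining side of $L_C$ near $s'$, fails to attract $p$. But this is not true in general: a beacon in the domain $P_C$ that lies outside the constraining half-plane $C$ may perfectly well attract $p$ --- that is the whole point of defining the constraining \emph{region} as $P_C\cap C$ rather than all of $P_C$. Your argument that ``$u$ can never cross $\overline{uv}$ under the pull of $b_1$'' is beside the point: since you have placed $b_1$ on the $u$-side of $v$, the trajectory of $u$ toward $b_1$ need not cross $\overline{uv}$ at all. Lemmas~\ref{lem02}--\ref{lem01} do not ``re-run'' on the opposite side of $L_C$; they are one-sided by construction, and nothing analogous holds there.

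The paper's proof takes a different and much shorter route. Instead of arguing that neither side of $s$ attracts $p$, it argues that $v$ does not \emph{distinguish} the two sides: since $s$ lies outside $\mathrm{deadwedge}(v)$, Lemma~\ref{lem:deadwedge} says $v$ introduces no split edge for beacons on $s$, so moving a beacon across $s$ cannot change the outcome of the attraction at $v$. Hence, if $s$ were on $\partial\iar(p)$, the discontinuity would have to be caused by some other reflex vertex $v'$, whose effective line would then coincide with $L_C$. This is ruled out by the standing general-position assumption (stated at the beginning of Section~4) that no two constraining half-planes of different edges of $\spt(p)$ are collinear. Your argument never invokes this assumption, and without it the statement actually fails: two reflex vertices can share the same effective line, with the portion outside $\mathrm{deadwedge}(v)$ lying inside $\mathrm{deadwedge}(v')$ and genuinely contributing to $\partial\iar(p)$.
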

\begin{proof}
By Lemma~\ref{lem:deadwedge}, vertex $v$ does not introduce a split edge for any point on $s$, and thus $v$ (or the edge $\overline{uv}$) does not have an effect on the destination of the points on different sides of $s$ in the attraction pull of $b$. As we assume that no two constraining half-planes of different edges of the shortest path tree are co-linear, no constraining half-plane of any other vertex is co-linear with $s$, and the lemma follows. 
\end{proof}

\noindent We define $\tilde{L}_{C} = L_C \cap P_C ~\cap$ deadwedge($v$) and $\tilde{L}_{C'} = L_{C'} \cap P_{C'} ~\cap$ deadwedge($v'$). 
By Lemma~\ref{lem:no-deadwedge}, $\tilde{L}_{C}$ and $\tilde{L}_{C'}$ are the subset of $L_C$ and $L_{C'}$  that may appear on the boundary of $\iar(p)$, therefore, the intersection points of all  $\tilde{L}_{C}$ and $\tilde{L}_{C'}$ are the only possible locations for internal vertices of $\iar(p)$. 
Consider an internal vertex $a$ resulting from the intersection of $\tilde{L}_{C}$ and $\tilde{L}_{C'}$.

\begin{lemma}
\label{lem:charge}
Let $a = \tilde{L}_{C} \cap \tilde{L}_{C'}$ be an internal vertex of $\iar(p)$ and let $C' \leq C$ (Figure~\ref{fig53}). Then all points on $\tilde{L}_{C}$ are in the domain of $C'$.
\end{lemma}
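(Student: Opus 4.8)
The plan is to show that the chain of the polygon boundary separating $\tilde{L}_C$ from the "far" side of $\overline{vz}$ (the segment defining the domain $P_{C'}$ of $C'$) cannot get between $\tilde{L}_C$ and that side, so all of $\tilde{L}_C$ must lie in the domain $P_{C'}$. Concretely, recall that $a = \tilde{L}_C \cap \tilde{L}_{C'}$ is on the boundary of $\iar(p)$, so $a$ itself attracts $p$ (boundary points are attracted), and $a \in \tilde{L}_C \subseteq L_C \cap \mathit{deadwedge}(v)$. By Lemma~\ref{lem:ober}, the attraction trajectory $\at(p,a)$ passes through both $u$ and $v$, where $u = \parent(v)$ in $\spt(p)$. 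The key relation to exploit is the hypothesis $C' \leq C$, i.e.\ $|\sp(p,v')| \leq |\sp(p,v)|$, together with the monotonicity of shortest-path distance along beacon trajectories (Lemma~\ref{lem:moninc}).

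First I would argue that $v'$, the reflex vertex whose edge $\overline{u'v'}$ owns $C'$, has $v$ on the far side of (or on) the domain boundary $\overline{v'z'}$ — or rather, turn the statement around: since $a$ attracts $p$ and $a$ lies on $\tilde{L}_{C'}$, by Lemma~\ref{lem:ober} applied to $C'$ the trajectory $\at(p,a)$ also passes through $u'$ and $v'$. So $\at(p,a)$ visits, in shortest-path-distance order (by Lemma~\ref{lem:moninc}, the order along the trajectory is the order of increasing geodesic distance from $p$), both $v'$ and $v$, and since $|\sp(p,v')| \le |\sp(p,v)|$, the trajectory reaches $v'$ no later than it reaches $v$. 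Now consider any point $b \in \tilde{L}_C$. I want to show $b \in P_{C'} = $ domain of $C'$, where $P_{C'}$ is the subpolygon cut off by $\overline{v'z'}$ that contains the incident edge of $v'$ NOT on the $p$-side — precisely the subpolygon that does not contain $p$ (since $C'$'s constraining region consists of points that fail to move $u'$ past $v'$, and these lie on the $v'$-edge side away from $p$). Since $b$ lies on $L_C$ inside $P_C$, and $P_C$ is itself a subpolygon cut off by the segment through $v$ perpendicular to / along an edge of $v$, I would use the fact that $a$ is a common point of the closures of $P_C$ and $P_{C'}$ and that $a$ attracts $p$ while the trajectory passes $v'$ before $v$ to conclude that the whole segment $\tilde{L}_C$ lies beyond $\overline{v'z'}$ on the side away from $p$: if some point of $\tilde{L}_C$ were on the $p$-side of $\overline{v'z'}$, then that point would lie outside the domain of $C'$ yet still in a region where the local obstruction at $v'$ is irrelevant, and one shows this forces $\tilde{L}_C$ to cross $\overline{v'z'}$ inside $P$, contradicting that $\tilde{L}_C$ is a single segment contained in a single cell of $\spm(p)$ / a single region where only $v$ acts.

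The main obstacle I expect is the geometric bookkeeping of which side of $\overline{v'z'}$ is the domain, and verifying that $\tilde{L}_C$ cannot poke out of $P_{C'}$: this requires carefully combining (i) $a \in \overline{P_C} \cap \overline{P_{C'}}$, (ii) the convexity of $\iar(p)$ with respect to $P$ (Biro~\cite{B13}) to control how $\tilde{L}_C$ sits relative to the boundary of $P$, and (iii) the ordering $C' \le C$ translated via Lemma~\ref{lem:moninc} into a statement that $v'$ is encountered "earlier," hence $\overline{v'z'}$ separates $p$ from $v$, hence from all of $P_C$, hence from $\tilde{L}_C$. Once the correct orientation is pinned down, the containment $\tilde{L}_C \subseteq P_{C'}$ should follow because $P_{C'}$ is exactly the subpolygon on the non-$p$ side of $\overline{v'z'}$, and $\tilde{L}_C$, being a connected segment with $p$ on the far side of $v$ which is on the far side of $v'$, lies entirely there; this is precisely the fact needed to conclude (in the next step of the paper) that $C'$'s constraint applies to all of $\tilde{L}_C$, so that $a$ is an endpoint of the visible part of $\tilde{L}_C$ and hence $C$ is charged at most twice.
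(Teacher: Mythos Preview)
Your opening matches the paper: since $a\in\tilde L_C\cap\tilde L_{C'}$ attracts $p$, Lemma~\ref{lem:ober} applied to both $C$ and $C'$ gives that $\at(p,a)$ passes through $v$ and through $v'$, and $C'\le C$ together with Lemma~\ref{lem:moninc} forces the trajectory to reach $v'$ before $v$. The paper then invokes a fact from the proof of Theorem~\ref{thm:1} that you do not state explicitly: once the trajectory has passed $v'$, it never leaves the domain $P_{C'}$. This is what yields $v\in P_{C'}$ and anchors the rest of the argument; in your write-up this appears only implicitly in the chain ``$v'$ is encountered earlier, hence $\overline{v'z'}$ separates $p$ from $v$'', which is not immediate without that trajectory-confinement fact.

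The real gap is your finishing step. You propose to derive a contradiction from ``$\tilde L_C$ is a single segment contained in a single cell of $\spm(p)$'', but $\tilde L_C = L_C\cap P_C\cap\mathrm{deadwedge}(v)$ is not defined by, and need not lie in, a single $\spm(p)$ cell; nor does relative convexity of $\iar(p)$ give the needed separation. Likewise, the implication ``$\overline{v'z'}$ separates $p$ from $v$, hence from all of $P_C$'' is asserted but not argued, and it is exactly the non-obvious part. The paper's actual argument is a Jordan-curve contradiction with the simplicity of $P$: assuming $\tilde L_C\not\subset P_{C'}$, one first pins down (after normalising $\tilde L_{C'}$ to be horizontal) that $v$ must lie below $L_{C'}$ and past $a$, and then builds a closed polygonal chain out of $\at(v',a)$ together with two straight segments back to $v'$; this chain stays inside $P$, yet boundary points of $P$ lie both inside and outside it, contradicting that $P$ is simple. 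Your proposal does not supply this (or any equivalent) contradiction, so the containment $\tilde L_C\subset P_{C'}$ remains unproved.
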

\begin{proof}
Consider a beacon $a$. By Lemma~\ref{lem:ober}, $\at(p,a)$ passes through  both $v'$ and $v$. As $C' \leq C$, we have that $\left |  \sp(p,v') \right | \leq \left | \sp(p,v) \right |$, and therefore, by Lemma~\ref{lem:moninc}, $\at(p,a)$ reaches $v'$ before $v$. 
Recall from the proof of Theorem~\ref{thm:1} that $\at(a,v')$ does not leave the domain of $C'$, and thus $v$ belongs to the domain of $C'$. 
Without loss of generality, assume that  $\tilde{L}_{C'}$ is horizontal and the constraining half-plane of $C'$ is below this horizontal line and $a$ is to the left of $v'$ (Figure~\ref{fig57}).

For the sake of contradiction assume $\tilde{L}_{C} \not\subset P_{C'}$, then $\tilde{L}_{C}$ must intersect the boundary of the domain of $C'$. This happens only if $v$ lies below the supporting line of $\tilde{L}_{C'}$ and to the left of $a$ (Figure~\ref{fig57}). 
Let $z$ be the closest point on the line segment $\overline{v'a}$ to $a$ that $\at(a,v')$ passes through.
Consider the polygonal chain $PC = \at(v',a) \cup \overline{aw} \cup \overline{wv'}$. The chain $PC$ does not cross any edges of $P$, and at the same time, there are points on $P$ inside and outside of this chain; adjacent vertices of $v'$ are outside of $PC$ and the point $z$ (and at least one adjacent vertex to $z$) is inside of $PC$.  This contradicts the simplicity of $P$.  
\end{proof}
%

\noindent We charge $a$ to $C$ if $C' \leq C$, otherwise we charge it to $C'$. 
Assume $a$ is charged to $C$.  
By Lemma~\ref{lem:charge}, all points on $\tilde{L}_{C}$ to one side of $a$ belong to the domain of $C'$ and therefore are in $C'$. Thus, $C$ cannot contribute any other internal vertices to this side of $a$. This implies that  $C$ can be charged at most twice (once from  each end) and as there are a linear number of constraining half-planes, we have at most a linear number of vertices of group 2, and we have the following theorem. 

\begin{theorem}
The inverse attraction region of a point $p$ has linear complexity in a simple polygon.
\end{theorem}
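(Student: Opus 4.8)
The plan is to bound separately the two groups of vertices of $\iar(p)$ that were introduced just before the statement: the vertices lying on the boundary $\partial P$, and the \emph{internal} vertices formed by intersections of two constraining half-planes. For the first group, I would invoke the result of Biro~\cite{B13} that $\iar(p)$ is convex with respect to $P$. Convexity with respect to $P$ means any chord between two points of $\iar(p)$ either stays inside $\iar(p)$ or exits $P$ entirely; in particular, the intersection of $\iar(p)$ with any line is a single connected segment (a ``pseudo-convexity'' along lines), so in particular $\iar(p)$ meets the supporting line of each edge of $P$ in at most one segment, contributing at most two boundary vertices per edge of $P$. Summing over the $n$ edges of $P$ gives $O(n)$ vertices in group~1. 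I should be a little careful here about edges of $\iar(p)$ that run along $\partial P$ versus transversal crossings, but in all cases a per-edge-of-$P$ count of $O(1)$ holds.

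For the second group I would use the charging argument that the excerpt has set up. Each internal vertex $a$ arises as $\tilde L_C \cap \tilde L_{C'}$ for two constraining half-planes $C, C'$ coming from edges $\overline{uv}, \overline{u'v'}$ of $\spt(p)$, where (by Lemma~\ref{lem:no-deadwedge}) we may restrict attention to the portions of the effective associated lines lying inside the respective deadwedges. Order $C \le C'$ by $|\sp(p,v)| \le |\sp(p,v')|$, and charge $a$ to the larger of the two, say $C$. The key structural input is Lemma~\ref{lem:charge}: if $C' \le C$ then \emph{all} of $\tilde L_C$ on one side of $a$ lies in the domain of $C'$, hence inside the constraining region of $C'$, hence cannot belong to $\partial \iar(p)$. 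Therefore $C$ cannot produce a second internal vertex on that side of $a$; since a line has two sides, $C$ is charged at most twice. As there are at most two constraining half-planes per reflex vertex of $P$ and hence $O(n)$ of them in total, group~2 has $O(n)$ vertices. Adding the two bounds, $\iar(p)$ has $O(n)$ vertices, and since each connected component is a polygon its edge count is at most its vertex count, giving total complexity $O(n)$.

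The main obstacle — and the step that genuinely carries the argument — is Lemma~\ref{lem:charge} together with the reduction in Lemma~\ref{lem:no-deadwedge}. Lemma~\ref{lem:no-deadwedge} is needed to ensure that every internal vertex actually arises from a deadwedge-restricted piece $\tilde L_C$, which is what makes the beacon at such a vertex satisfy the hypotheses of Lemma~\ref{lem:ober} (so that $\at(p,a)$ passes through both $u,v$ and both $u',v'$); this in turn is what lets Lemma~\ref{lem:moninc} (monotone increase of geodesic distance along a beacon trajectory) order the visits to $v$ and $v'$, and finally what lets the ``$\at(a,v')$ does not leave the domain of $C'$'' fact from Theorem~\ref{thm:1} be applied. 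The delicate part inside Lemma~\ref{lem:charge} is the simplicity-of-$P$ contradiction: one assumes $\tilde L_C \not\subset P_{C'}$, so $\tilde L_C$ crosses the boundary of the domain of $C'$, forcing $v$ to lie below the supporting line of $\tilde L_{C'}$ and to the left of $a$, and then one exhibits a closed polygonal chain $PC = \at(v',a) \cup \overline{aw} \cup \overline{wv'}$ that separates some vertices of $P$ (neighbors of $v'$) from others (the point $z$ where $\at(a,v')$ last meets $\overline{v'a}$, and a neighbor of it), contradicting that $P$ is simple. Getting all the ``without loss of generality'' normalizations (which incident edge of $v$ is visible, orientation of $\tilde L_{C'}$, side of $v'$ on which $a$ lies) consistent across the two cases of Section~3.1 is the part that requires the most care, but it is essentially bookkeeping on top of the three facts just named.
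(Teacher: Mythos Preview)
Your proposal is correct and follows essentially the same route as the paper: split the vertices of $\iar(p)$ into those on $\partial P$ (bounded via Biro's relative convexity, two per edge of $P$) and internal ones (bounded via the charging scheme built on Lemma~\ref{lem:no-deadwedge} and Lemma~\ref{lem:charge}, each constraining half-plane charged at most twice). One small slip: relative convexity does \emph{not} imply that $\iar(p)$ meets an arbitrary line in a single segment (indeed $\iar(p)$ can have $\Omega(n)$ components), but it does imply this for any segment contained in $P$, in particular for each edge of $P$, which is all you need and is exactly what the paper uses.
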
 
%

\begin{figure}[t]
\centering
\includegraphics{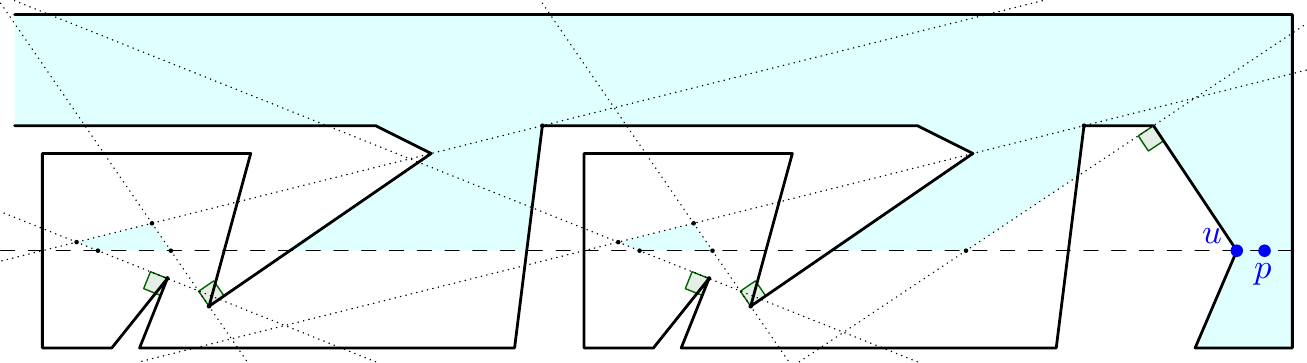}
\caption{\label{fig54} A constraining half-plane may contribute  $O(n)$  vertices of group 2 to the inverse attraction region. Here the inverse attraction region of $p$ is colored.}
\end{figure}

\noindent Note that, as illustrated in Figure~\ref{fig54}, a constraining half-plane may contribute many vertices of group 2 to the inverse attraction region, but nevertheless it is charged at most twice.

\section {Computing the inverse attraction region}
In this section we show how to compute the inverse attraction region of a point inside a simple polygon in $O(n \log n)$ time.

\begin{figure}[b]
\centering
\includegraphics{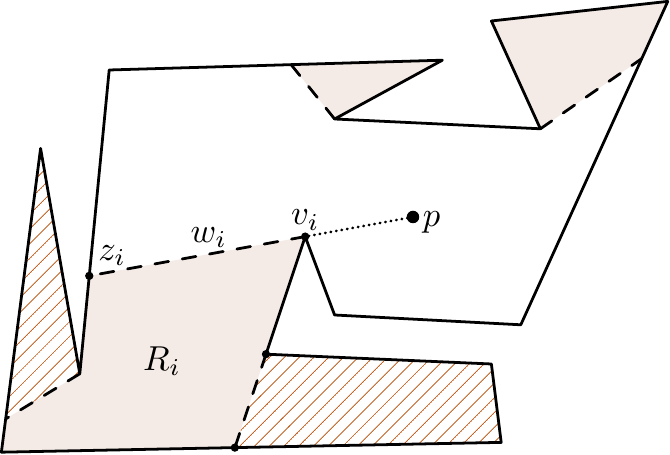}
\caption{\label{fig43} $R_{i}$ is a region of $\spm(p)$ with base $v_{i}$. Segment $w_i$ is the window, and $z_{i}$---its end.}
\end{figure}

Let region $R_{i}$ of the shortest path map $\spm(p)$ consist of all points $t$ such that the last segment of the shortest path from $p$ to $t$ is $\overline{v_i t}$ (Figure~\ref{fig43}). Vertex $v_i$ is called the \emph{base} of $R_{i}$. Extend the edge of $\spt(p)$ ending at $v_i$ until the first intersection $z_i$ with the boundary of $P$. Call the segment $w_i=\overline{v_i z_i}$ a $\emph{window}$, and point $z_i$---the \emph{end} of the window; window $w_i$ is a boundary segment of $R_{i}$.

We will construct a part of the inverse attraction region of $p$ inside each region of the shortest path map $\spm(p)$ independently. A point in a region of $\spm(p)$ attracts $p$ only if its attraction can move $p$ into the region through the corresponding window.

\begin{lemma}
\label{lem:10}
Let $R_{i}$ be a region of $\spm(p)$ with a base vertex $v_i$. If $v_i$ lies in some domain subpolygon $P_e$, then any point $t$ in $R_{i}$ lies in $P_e$.
\end{lemma}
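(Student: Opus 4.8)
The plan is to show that $R_i \subseteq P_e$ by using the defining structure of the domain subpolygons together with the fact that $v_i$ is the base of $R_i$, meaning the last edge of $\sp(p,t)$ is $\overline{v_i t}$ for every $t \in R_i$. Recall that each domain subpolygon $P_e$ associated to an edge $e = \overline{uv} \in \spt(p)$ is cut off from $P$ by a chord of the form $\overline{v z}$, where $z$ is on the boundary of $P$ and the chord is either perpendicular to the supporting line of $\overline{uv}$ at $v$ (Case 1) or along the extension of an edge incident to $v$ (Case 2). So the statement amounts to: if $v_i$ lies on the $P_e$-side of such a chord $\overline{vz}$, then so does all of $R_i$.

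First I would argue that the window $w_i = \overline{v_i z_i}$ does not cross the chord $\overline{vz}$ bounding $P_e$. The key point is that the boundary of $P_e$ consists of the single chord $\overline{vz}$ together with a portion of $\partial P$; since $w_i$ has one endpoint $z_i$ on $\partial P$ and, by hypothesis, $v_i \in P_e$, the only way for $w_i$ to escape $P_e$ would be to cross $\overline{vz}$. I would rule this out using the monotonicity of shortest-path distance along the pruned shortest path tree: the chord $\overline{vz}$ emanates from the reflex vertex $v$, which itself is a node of $\spt(p)$, and $w_i$ is (the extension of) a shortest-path edge ending at $v_i$. If $w_i$ crossed $\overline{vz}$ near $v$, the shortest path from $p$ to $v_i$ (resp. to a point just past the crossing) could be shortcut through $v$, contradicting that $v_i$'s parent in $\spt(p)$ is the base — essentially the same shortcut argument used in the proof of Theorem~\ref{thm:1}. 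Hence $w_i \subseteq P_e$, so $z_i$ and indeed the whole window lie in $P_e$.

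Next, having established $w_i \subseteq P_e$, I would conclude that $R_i \subseteq P_e$ as follows. The region $R_i$ is bounded by the window $w_i$, by pieces of $\partial P$, and possibly by other windows of $\spm(p)$ incident to $v_i$; but in all cases $R_i$ lies entirely on one side of $w_i$ (the side away from $p$), and $R_i$ is "visible" from $v_i$ in the sense that every $t \in R_i$ sees $v_i$ along $\overline{v_i t}$. Since the chord $\overline{vz}$ cutting off $P_e$ does not cross $w_i$ and does not cross $\partial P$ in the interior, $\overline{vz}$ is disjoint from the closure of $R_i$ except possibly at a shared endpoint on $w_i$ or $\partial P$; therefore $R_i$, being connected and containing $v_i \in P_e$, lies entirely in $P_e$. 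A clean way to phrase this last step is: any point $t \in R_i$ reaches $v_i$ via the segment $\overline{v_i t} \subseteq R_i$, and $\overline{v_i t}$ cannot cross $\overline{vz}$ (that would force a crossing of $w_i$ or of $\partial P$), so $t$ is on the same side of $\overline{vz}$ as $v_i$, i.e., $t \in P_e$.

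The main obstacle I anticipate is the first step — showing the window $w_i$ stays inside $P_e$ — and in particular handling the geometry of the chord near the reflex vertex $v$. One must be careful that $v$ may or may not coincide with $v_i$ or lie on the path from $p$ to $v_i$, and the orientation of the perpendicular/extension chord relative to the deadwedge matters. I expect the cleanest route is to invoke Lemma~\ref{lem:deadwedge} and the structure of $\spt(p)$: a crossing of $w_i$ with $\overline{vz}$ would put $v$ "between" $p$ and $v_i$ in a way that either contradicts $v_i$ being the base of $R_i$ (via the shortcut argument) or contradicts simplicity of $P$ (as in the proof of Lemma~\ref{lem:charge}). Once that topological fact is nailed down, the rest is a short connectivity argument.
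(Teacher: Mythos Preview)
Your plan takes an unnecessary detour and leaves real gaps in both steps, whereas the paper's argument is a two-line application of a single observation you never isolate: \emph{a geodesic in a simple polygon crosses any fixed chord at most once}. With that in hand, the case $v\neq v_i$ is immediate: since $p\notin P_e$ and $v_i\in P_e$, the path $\sp(p,v_i)$ crosses $\overline{vz}$ exactly once (and not at $v_i$); if some $t\in R_i$ lay outside $P_e$, then $\overline{v_i t}$ would cross $\overline{vz}$ as well, so $\sp(p,t)=\sp(p,v_i)\cup\overline{v_i t}$ would cross the chord twice --- contradiction. The case $v=v_i$ is then a short direct check of the two possible orientations of the chord relative to $\overline{u_i v_i}$.

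Concretely, here is where your outline breaks. In step~1 you propose to show $w_i\subseteq P_e$ via a ``shortcut through $v$'' argument, but $v$ need not lie on $\sp(p,v_i)$ at all, so there is no evident shortcut to take; the vague appeal to monotonicity of geodesic distance along $\spt(p)$ does not pin anything down. In step~2 you assert that if $\overline{vz}$ misses $w_i$ and $\partial P$ then it misses $R_i$, but $\partial R_i$ also contains the windows $w_j$ of the children $v_j$ of $v_i$ in $\spt(p)$, and you give no reason $\overline{vz}$ avoids those; the parenthetical ``that would force a crossing of $w_i$ or of $\partial P$'' is simply not justified. Ironically, the correct way to rescue either step is exactly the chord-crossing observation above applied to $\sp(p,t)$ --- at which point the window is irrelevant and you have the paper's proof.
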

\begin{proof}
Observe, that a shortest path between two points inside a polygon can cross a segment connecting two boundary vertices of $P$ visible to each other at most once.

Let the subpolygon $P_e$ be induced by a segment $\overline{v_j z}$, where $v_j\not=v_i$. If $v_i$ lies inside $P_e$, then the shortest path from $v_i$ to $p$ intersects $\overline{v_j z}$, and the intersection point is not $v_i$. Segment $\overline{tv_i}$ cannot intersect $\overline{v_j z}$, otherwise the shortest path from $t$ to $p$ would cross $\overline{v_j z}$ more than once.

Now let the subpolygon $P_e$ be induced by a segment $\overline{v_i z}$, and let $u_i=\parent(v_i)$. 
Then, $\overline{v_i z}$ is either perpendicular to  $\overline{u_i v_i}$
(Case 1 of Section~\ref{sec:3.1}), or the extension of the edge ``facing'' $u_i$ (Case 2 of Section~\ref{sec:3.1}). 
In either cases $t$ lies inside $P_e$.
\end{proof}

Let $R_{i}$ be a region of $\spm(p)$ with a base vertex $v_i$, and let $\H_i$ be the set of all constraining half-planes corresponding to the domain subpolygons that contain the point $v_i$. Denote $\free_{i}$ to be the intersection of the complements of the half-planes in $\H_i$. Note, that $\free_{i}$ is a convex set. In the following lemma we show that $\free_{i} \cap R_{i}$ is exactly the set of points inside $R_{i}$ that can attract $p$.

\begin{lemma}
\label{lem:11}
The set of points in $R_{i}$ that attract $p$ is $\free_{i} \cap R_{i}$.
\end{lemma}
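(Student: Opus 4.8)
The plan is to establish the two inclusions separately, leaning on Theorem~\ref{thm:1} (a beacon attracts $p$ iff it lies in no constraining region of any edge of $\spt(p)$), on Lemma~\ref{lem:10} (if the base $v_i$ of $R_i$ lies in a domain subpolygon $P_e$ then so does every point of $R_i$), and on the observation that $\free_i$ is by construction the complement (within $P$, intersected appropriately) of exactly those constraining regions whose domains contain $v_i$.

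First I would prove $\free_i \cap R_i \subseteq \{$points attracting $p\}$. Take $b \in \free_i \cap R_i$, and suppose for contradiction $b$ does not attract $p$. By Theorem~\ref{thm:1}, $b$ lies in the constraining region of some edge $\overline{uv} \in \spt(p)$; that constraining region is $P_e \cap C$ for the domain subpolygon $P_e$ and the constraining half-plane $C$ of $\overline{uv}$. So $b \in P_e$ and $b \in C$. The key step: since $b \in R_i \subseteq P$ and $b \in P_e$, and $P_e$ is a subpolygon induced by a window-type cut, I want to deduce that the base $v_i$ is also in $P_e$ — this is the contrapositive direction of the reasoning behind Lemma~\ref{lem:10}. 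Concretely, if $v_i \notin P_e$ then by Lemma~\ref{lem:10} applied to the complementary subpolygon (or directly: the shortest path from $b\in R_i$ to $p$ has last segment $\overline{v_i b}$, and $\overline{v_i b}$ cannot cross the inducing segment of $P_e$ more than the shortest path does), no point of $R_i$ lies in $P_e$, contradicting $b \in P_e$. Hence $v_i \in P_e$, so $C \in \H_i$, so the complement of $C$ is one of the sets intersected to form $\free_i$, meaning $b \notin C$ — contradiction. Therefore $b$ attracts $p$.

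Conversely I would prove $\{$points in $R_i$ attracting $p\} \subseteq \free_i \cap R_i$. Let $b \in R_i$ attract $p$. Suppose $b \notin \free_i$; then $b \in C$ for some constraining half-plane $C \in \H_i$, say $C$ the constraining half-plane of edge $\overline{uv}$ with domain $P_e \ni v_i$. By Lemma~\ref{lem:10}, since $v_i \in P_e$ and $b \in R_i$, we get $b \in P_e$. Thus $b \in P_e \cap C$, the constraining region of $\overline{uv}$, so by Theorem~\ref{thm:1} $b$ does not attract $p$ — contradiction. Hence $b \in \free_i$, and since $b\in R_i$ by hypothesis, $b \in \free_i \cap R_i$.

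The main obstacle is the forward direction's geometric step: from ``$b$ lies in a constraining region whose domain need not a priori contain $v_i$'' to ``that domain does contain $v_i$.'' This is exactly where Lemma~\ref{lem:10}'s shortest-path-crossing argument must be invoked in the right orientation — one has to argue that a domain subpolygon containing a point of $R_i$ must contain the base $v_i$ (equivalently, $R_i$ cannot be split by a window cut into pieces on both sides). Once that is in hand, the set-theoretic bookkeeping between $\free_i$, $\H_i$, and the constraining regions is routine, and convexity of $\free_i$ (already noted) is only needed downstream for the algorithm, not for this lemma.
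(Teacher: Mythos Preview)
Your overall architecture is right and matches the paper: prove the two inclusions using Theorem~\ref{thm:1} and Lemma~\ref{lem:10}. The ``easy'' inclusion (attractors of $p$ in $R_i$ lie in $\free_i$) is handled correctly: if $b\notin\free_i$ then $b\in C$ for some $C\in\H_i$ with domain $P_e\ni v_i$, Lemma~\ref{lem:10} gives $b\in P_e$, so $b$ sits in a constraining region and cannot attract $p$.

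The gap is exactly where you flag it, but your proposed fix does not work. For the forward inclusion you need: if $b\in R_i$ lies in a constraining region $P_e\cap C$, then $v_i\in P_e$. You suggest getting this by ``Lemma~\ref{lem:10} in reverse'' via the same shortest-path-crossing trick. But that argument relies on $p\notin P_e$ so that $\sp(p,v_i)$ already crosses the inducing chord once; in the reverse situation $v_i\notin P_e$ and $p\notin P_e$, so $\sp(p,v_i)$ crosses the chord zero times, and $\overline{v_ib}$ crossing it once yields no contradiction. Applying Lemma~\ref{lem:10} to the complementary subpolygon also fails, since that subpolygon contains $p$ and the proof of Lemma~\ref{lem:10} uses $p\notin P_e$ essentially.

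The paper resolves this step differently: instead of citing Theorem~\ref{thm:1} as a black box, it goes back to the separation edge $s$ of $\ar(t)$ introduced by $v_j$ and uses attraction rather than shortest paths. Since $t\in R_i$ sees its base $v_i$, we have $v_i\in\ar(t)$, hence $v_i$ and $t$ lie on the same side of $s$; combined with the fact (from the proof of Theorem~\ref{thm:1}) that $t$'s side of $s$ is contained in the domain $P_e$, this places $v_i$ in $P_e$ and therefore $C\in\H_i$, giving the desired contradiction with $t\in\free_i$. So the missing ingredient in your proposal is this visibility/attraction observation $v_i\in\ar(t)$, which substitutes for the unavailable reverse of Lemma~\ref{lem:10}.
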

\begin{proof}
Consider a point $t$ in $R_{i}$. If $t$ lies in a constraining region of one of the domain subpolygons containing $v_i$ (and thus $t$ does not attract $p$), then $t\not\in\free_{i}$, and thus $t\not\in\free_{i} \cap R_{i}$.

If $t\in\free_{i} \cap R_{i}$, then $t$ does not lie in any of the constraining regions of the domain subpolygons containing $v_i$. Assume that $t$ does not attract $p$, \ie there is a separation edge $s$ of $\ar(t)$, such that $p$ and $t$ are in the different subpolygons induced by $s$. Let $v_j\not=v_i$ be the reflex vertex that introduces $s$. Then $t$ does not see vertex $v_j$. Otherwise, as $p$ and $t$ lie in the different subpolygons induced by $s$, and $s$ and $t$ are collinear, vertex $v_j$ would be the base vertex of a region of $\spm(p)$ containing $t$. As $v_i\in\ar(t)$, points $t$ and $v_i$ are in the same subpolygon induced by $s$. Then the domain subpolygon of the constraining half-plane of $v_j$ either contains both $v_i$ and $p$, or neither. Thus, if $t$ does not attract $p$, then it cannot lie in $\free_{i}$.
\end{proof}

This results in the following algorithm for computing the inverse attraction region of $p$. We compute the constraining half-planes of every edge of $\spt(p)$ of $p$ and the corresponding domain subpolygons. Then, for every region $R_{i}$ of the shortest path map of $p$, we compute the free region $\free_{i}$, where $v_i$ is the base vertex of the region; and we add the intersection of $R_{i}$ and $\free_{i}$ to the inverse attraction region of $p$. The pseudocode is presented in Algorithm~\ref{alg:1}.

\begin{algorithm}[b]
\caption{Inverse attraction region.}\label{alg:1}
\renewcommand{\algorithmicrequire}{\textbf{Input:}}
\renewcommand{\algorithmicensure}{\textbf{Output:}}
\renewcommand{\algorithmicforall}{\textbf{for each}}
\begin{algorithmic}[1]
\REQUIRE Simple polygon $P$, and a point $p \in P$.
\ENSURE Inverse attraction region of $p$.
\STATE Compute $\spt(p)$ and $\spm(p)$. 
\FORALL{edge $e \in\spt(p)$}
\STATE Compute constraining half-planes of $e$ and corresponding domain subpolygons.
\ENDFOR
\FORALL{region $R_i$ of $\spm(p)$ with base vertex $v_i$}
\STATE Find all the domain subpolygons that contain $v_i$, and compute $\free_{i}$.
\STATE Intersect $R$ with $\free_{i}$, and add the resulting set to the inverse attraction region of $p$.
\ENDFOR
\RETURN Inverse attraction region of $p$.
\end{algorithmic}
\end{algorithm}

Rather than computing each free space from scratch, we can compute and update free spaces using the data structure of Brodal and Jacob~\cite{BD02}. 
Their data structure allows to dynamically maintain the convex hull of a set of points and supports insertions and deletions in amortized $O(\log n)$ time using $O(n)$ space. In the dual space this is equivalent to maintaining the intersection of $n$ half-planes. In order to achieve a total $O(n \log n)$ time, we need to provide a way to traverse recursive visibility regions and guarantee that the number of updates (insertions or deletions of half-planes) in the data structure is $O(n)$. In the rest of this section, we provide a proof for the following lemma.
\begin{lemma}
\label{theorem:freeSpaces}
Free spaces of the recursive visibility regions can be computed in a total time of $O(n \log n)$ using $O(n)$ space.
\end{lemma}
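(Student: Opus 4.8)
The goal is to show that, processing the regions $R_i$ of $\spm(p)$ in an order that respects the tree structure of $\spt(p)$, we can maintain $\free_i$ with only $O(n)$ total half-plane insertions/deletions, each handled in amortized $O(\log n)$ time by the Brodal--Jacob structure. The key observation I would exploit is that $\H_i$ — the set of constraining half-planes whose domain subpolygon contains the base vertex $v_i$ — changes in a controlled, ``nested'' way as we move from a region to an adjacent one. First I would set up a traversal of the regions of $\spm(p)$ that follows a DFS of $\spt(p)$: the base vertices $v_i$ are exactly the reflex vertices of $P$ (the nodes of $\spt(p)$), and moving from the region with base $\parent(v)$ to the region with base $v$ across the window $w_v$ corresponds to descending one edge of $\spt(p)$. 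By Lemma~\ref{lem:10}, whether a domain subpolygon $P_e$ contains $v_i$ is determined purely by which side of each inducing segment $\overline{v_j z_j}$ the vertex $v_i$ lies on, and this in turn is a monotone property along root-to-leaf paths of $\spt(p)$ (a domain subpolygon induced by $\overline{v_j z_j}$ either contains the whole subtree hanging off one side, or it does not).

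Concretely, the plan is: (1) root the DFS at $p$'s region; (2) when we descend the edge $\overline{uv}\in\spt(p)$ into the region $R_v$, determine which constraining half-planes must be \emph{added} to $\H$ — these are the half-planes of $\overline{uv}$ itself (at most two, from Corollary~\ref{cor:effective}), plus possibly half-planes of ancestor edges whose domain subpolygon contains $v$ but not $u$; and which must be \emph{removed} — those whose domain contained $u$ but not $v$; (3) when we backtrack up the edge, undo exactly these changes. The crucial quantitative claim is that, summed over the whole DFS, the number of add/remove operations is $O(n)$. For the half-planes ``owned'' by the edge $\overline{uv}$ this is immediate (each edge is entered once, contributing $O(1)$). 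For half-planes of a domain subpolygon $P_e$ with $e=\overline{u'v'}$, the set of base vertices $v_i$ contained in $P_e$ forms, by the single-crossing property in the proof of Lemma~\ref{lem:10}, a \emph{connected subtree} of $\spt(p)$ (more precisely, a subtree hanging below $v'$, or one of the two pieces obtained by cutting at $v'$); a connected subtree is entered and exited by the DFS exactly once, so that half-plane is inserted once and deleted once. Hence the total number of updates is $2\cdot(\text{number of constraining half-planes}) + O(n) = O(n)$, giving the $O(n\log n)$ time bound, and the space is $O(n)$ since the Brodal--Jacob structure uses linear space and at any moment $\free$ is the intersection of at most $n$ half-planes. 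Intersecting each $\free_i$ with $R_i$ and outputting the result costs time proportional to the combined complexity, which is $O(n)$ by the previous theorem.

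\textbf{Main obstacle.} The delicate point is step (2): verifying that the set of base vertices lying inside a fixed domain subpolygon $P_e$ is indeed a connected subtree of $\spt(p)$, and that we can detect the at-most-two transitions (in/out of $P_e$) along each tree edge cheaply enough — ideally $O(1)$ amortized per edge, or at worst with an $O(n\log n)$-total preprocessing such as walking the boundary of each domain subpolygon once and marking the windows it crosses. I would handle this by arguing from the single-crossing lemma (a geodesic, and in particular a window $w_{v_i}=\overline{v_i z_i}$ which is a subsegment of a $\spt(p)$-edge extension, crosses the inducing segment $\overline{v_j z_j}$ at most once), so the membership indicator of $v_i$ in $P_e$ flips at most once along any root-to-leaf path; combined with the fact that $P_e$ is induced either by an edge incident to $v'$ or by a perpendicular at $v'$, this pins down that the ``inside'' vertices form one subtree rooted near $v'$. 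Making this precise, and organizing the bookkeeping so that each domain subpolygon is touched only a constant number of times per relevant tree edge, is where the real work lies; the convex-set maintenance itself is then a black-box application of~\cite{BD02}.
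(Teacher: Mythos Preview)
Your plan is sound in spirit and uses the same black box (Brodal--Jacob) as the paper, but you have chosen a harder traversal than necessary, and the ``main obstacle'' you flag is real: verifying that the base vertices lying in a fixed domain subpolygon form a \emph{connected} subtree of $\spt(p)$ is not immediate, and without it your $O(n)$ update bound is unproven.

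The paper sidesteps this entirely. Instead of a DFS on $\spt(p)$, it simply walks along the boundary $\partial P$ in counter-clockwise order. The key observation---which you do not use---is that each domain subpolygon $P_e$ is cut off from $P$ by a \emph{single} chord $\overline{vz}$, so the vertices of $P$ lying in $P_e$ form one contiguous interval along $\partial P$. Hence during the boundary walk each domain subpolygon is entered once and exited once, giving exactly two updates per constraining half-plane and $O(n)$ updates total. Detecting entry/exit is trivial after an $O(n\log n)$ preprocessing that computes all the chords $\overline{vz}$ by ray shooting and marks their endpoints on $\partial P$. When the walk reaches a base vertex $v_i$, the current contents of the dynamic structure are exactly $\H_i$, and $\free_i\cap R_i$ is reported.

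So what you were trying to prove via a subtree argument is obtained for free from the interval structure on $\partial P$. The two traversals are in fact closely related---an Euler tour of the planar embedding of $\spt(p)$ visits the polygon vertices in boundary order---so your DFS would ultimately work, but only after you establish that correspondence or the subtree property; the paper's boundary walk makes the $O(n)$-update bound a one-line consequence of ``one chord, one interval,'' with no structural lemma about $\spt(p)$ required.
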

\begin{proof}
Consider a region $R_{i}$ of $\spm(p)$ with a base vertex $v_i$. By Lemma~\ref{lem:10} and Theorem~\ref{thm:1}, the set of constraining half-planes that affect the inverse attraction region inside $R_{i}$ corresponds to the domain subpolygons that contain $v_i$.

Observe that the vertices of a domain subpolygon appear as one continuous interval along the boundary of $P$, as there is only one boundary segment of the subpolygon that crosses $P$. Then, when walking along the boundary of $P$, each domain subpolygon can be entered and exited at most once. 
All the domain polygons can be computed in $O(n \log n)$ time by shooting $n$ rays and computing their intersection points with the boundary of $P$~\cite{CG89}.


Let the vertices of $P$ be ordered in the counter-clockwise order. For each domain subpolygon $P_e$, mark the two endpoints (e.g., vertices $v$ and $z$ in Figure~\ref{fig36}) of the boundary edge that crosses $P$ as the first and the last vertices of $P_e$ in accordance to the counter-clockwise order. Then, to obtain the optimal running time, we modify the second for-loop of the Algorithm~\ref{alg:1} in the following way. Start at any vertex $v_0$ of $P$, find all the domain subpolygons that contain $v_0$, and initialize the dynamic convex hull data structure of Brodal and Jacob~\cite{BD02} with the points dual to the lines supporting the constraining half-planes of the corresponding domain subpolygons. If $v_0$ is a base vertex of some region $R_{0}$ of $\spm(p)$, then compute the intersection of $R_{0}$ and the free space $\free_{0})$ that we obtain from the dynamic convex hull data structure. Walk along the boundary of $P$ in the counter-clockwise direction, adding to the data structure the dual points to the supporting lines of domain polygons being entered, removing from the data structure the dual points to the supporting lines of domain polygons being exited, and computing the intersection of each region of $\spm(p)$ with the free space obtained from the data structure.

The correctness of the algorithm follows from Lemma~\ref{lem:11}, and the total running time is $O(n \log n)$. Indeed, there will be $O(n)$ updates to the dynamic convex hull data structure, each requiring $O(\log n)$ amortized time. Intersecting free spaces with regions of $\spm(p)$ will take $O(n \log n)$ time in total, as the complexity of $\iar(p)$ is linear. The pseudocode of the algorithm is presented in Appendix~\ref{appendix:pseudocode}.
\end{proof}

\subsection{Lower Bound}
The proof of the following theorem is based on a reduction from the problem of computing the lower envelope of a set of lines, which has a lower bound of $\Omega(n \log n)$~\cite{Y81}. 
\begin{theorem}
\label{thm:reduction}
Computing the inverse attraction region of a point in a monotone (or a simple polygon) has a lower bound of $\Omega(n \log n)$. 
\end{theorem}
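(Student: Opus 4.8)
The plan is to reduce from the lower-envelope-of-lines problem, which is known to require $\Omega(n\log n)$ time in the algebraic computation tree model~\cite{Y81}. Given $n$ lines $\ell_1,\dots,\ell_n$ in the plane, I will build in linear time a monotone polygon $P$ together with a query point $p$ such that $\iar(p)$ encodes the lower envelope of the $\ell_i$. The idea is to realize each line $\ell_i$ as the effective associated line of a reflex vertex $v_i$ of $P$, so that by Corollary~\ref{cor:effective} and Theorem~\ref{thm:1} the boundary of $\iar(p)$ near the relevant region consists precisely of (portions of) the $\ell_i$ bounding the appropriate constraining half-planes. If the domains of all these constraining half-planes share a common ``viewing'' subregion $U$ of $P$ in which $p$ would like to place a beacon, then inside $U$ we have $\iar(p)\cap U = U \setminus \bigcup_i C_i$, where $C_i$ is the half-plane below $\ell_i$; the upper boundary of this set is exactly the upper envelope (equivalently, after a reflection, the lower envelope) of the lines. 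Reading off $\iar(p)$ therefore yields the envelope, so any algorithm computing $\iar(p)$ in time $T(n)$ gives an $O(T(n)+n)$ algorithm for the envelope, forcing $T(n)=\Omega(n\log n)$.

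The key construction steps, in order, are: (1) Normalize the input lines by an affine transformation so that all of them are ``near-horizontal'' with negative-$x$ intersection points clipped away, and cross a fixed vertical strip $[0,1]\times\mathbb{R}$; the envelope of the transformed lines still has the same combinatorial complexity and determines the original envelope. (2) Place $n$ reflex vertices $v_1,\dots,v_n$ along the lower chain of a thin monotone ``comb'', each $v_i$ sitting so that one of its incident edges is a tiny spike whose orthogonal half-plane $H_2$ (in the sense of Case 2 of Section~\ref{sec:3.1}) has supporting line exactly $\ell_i$; by Corollary~\ref{cor:effective} this $\ell_i$ is then the effective associated line of the edge $\overline{u_i v_i}$ of $\spt(p)$. (3) Arrange the global geometry so that $p$ lies in a pocket ``behind'' all the spikes, the parent $u_i$ of each $v_i$ is directly visible from $p$ (so $\spt(p)$ has the $v_i$ as depth-one or depth-two nodes with the correct parent), and the common domain subpolygon $P_1$ of every constraining half-plane $H_2$ contains the target window strip $U$. (4) Verify via Theorem~\ref{thm:1} that a beacon $b \in U$ attracts $p$ iff $b$ avoids every constraining region $P_1 \cap H_2 \supseteq U \cap C_i$, i.e.\ iff $b$ lies above all lines $\ell_i$ within $U$; hence $\partial(\iar(p)) \cap U$ traces the upper envelope. (5) Confirm that $P$ can be made monotone (a comb with a monotone spine is $x$-monotone) and has $O(n)$ vertices, and that the reduction (placing vertices, computing the transformation) uses only $O(n)$ arithmetic operations.

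The main obstacle will be step (3): ensuring that the intended reflex vertices are genuinely the ones that govern $\iar(p)$ inside $U$, and that no spurious reflex vertices, deadwedge boundaries, or $\spt(p)$-edge extensions introduce extra constraining regions that mask the envelope. Concretely I must check that (a) each $v_i$ really falls into Case 2 with $u_i$ on the correct side, so the effective line is $\ell_i$ and not the deadwedge line or the $\overline{u_i v_i}$ line; (b) the segment $s\in L_C\cap P_C$ for $v_i$ restricted to $U$ is nonempty and lies inside $\mathrm{deadwedge}(v_i)$, so Lemma~\ref{lem:no-deadwedge} does not delete it; and (c) the shortest-path tree from $p$ has the combinatorial shape I claim — this is the delicate part, since small displacements of the spike tips can change which vertex is the parent of which. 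I expect to handle (c) by making the ``spine'' of the comb a near-straight corridor so that $\sp(p,v_i)$ is essentially the straight segment $\overline{p v_i}$ bent only at a single shared reflex vertex, fixing a uniform parent structure.

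Once the geometric gadget is verified, the counting argument is immediate: the upper envelope of $n$ lines can have $\Theta(n)$ breakpoints and is recovered in $O(n)$ time from a description of $\iar(p)$ (by intersecting with the fixed region $U$ and reading its upper boundary), so an $o(n\log n)$ algorithm for $\iar(p)$ would beat the $\Omega(n\log n)$ envelope lower bound of Yao~\cite{Y81}, which holds in the algebraic computation tree and bounded-degree algebraic decision tree models; this contradiction proves the theorem, and since the polygon we built is monotone, the bound holds already for monotone polygons.
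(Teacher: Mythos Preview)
Your reduction from the lower envelope of $n$ lines is the same as the paper's, and even the mechanism---arranging that each input line $\ell_i$ appears as the Case-2 effective line of some reflex vertex, so that the corresponding constraining half-plane is bounded by $\ell_i$---coincides. What differs is the gadget. You propose a comb with $n$ independent spikes along a near-straight spine and then invoke Theorem~\ref{thm:1} and Corollary~\ref{cor:effective} to argue that the constraining regions all meet a common window $U$ in the envelope. The paper instead threads a single \emph{zigzag corridor}: for each $\ell_i$ (in arbitrary order) it appends a short segment perpendicular to $\ell_i$ followed by a slope-$1$ connector down to the next line; the point $p$ sits at the far end of the zigzag, and a large axis-aligned rectangle $R$ containing all pairwise intersections of the $\ell_i$ is attached at the near end. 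A beacon $b\in R$ attracts $p$ exactly when $p$ can slide past every bend of the zigzag, and the slide through the $i$-th bend succeeds precisely when $b$ lies below $\ell_i$; hence $\iar(p)\cap R$ is the region under the lower envelope, readable off in linear time. The zigzag sidesteps exactly the obstacles you flag in step~(3): because $p$ is at the end of a thin corridor, $\sp(p,v_i)$ is forced through all earlier bends in order, so $\spt(p)$ is a single chain and the Case-2 parent hypothesis on each $u_i$ holds automatically; the domain of every constraining half-plane contains everything downstream, in particular all of $R$; and monotonicity is immediate once the slopes are normalised to $(0,\varepsilon)$, since every perpendicular wall is then near-vertical. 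Your comb could in principle be made to work, but you have not actually discharged the parent-structure, domain-overlap, and monotonicity verifications you yourself identify as the crux, whereas the paper's zigzag makes all three trivial.
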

\begin{proof}
Consider a set of lines $L$. Let $l_b$ and $l_s$ denote the lines in $L$ with the biggest and smallest slope, respectively. Note that the leftmost (rightmost) edge of the lower envelope of $L$ is part of $l_b$ ($l_s$).

Without loss of generality assume that the slopes of the lines in $L$ are positive and bounden from above by a small constant $\varepsilon$. 
We construct a monotone polygon as follows.
The right part of the polygon is comprised of an axis aligned rectangle $R$ that contains all the intersection points of the lines in $L$ 
(Figure~\ref{fig35}). Note that $R$ can be computed in linear time.

\begin{figure}[t]
\begin{minipage}[t]{0.48\textwidth}
\centering
\includegraphics[width=\textwidth]{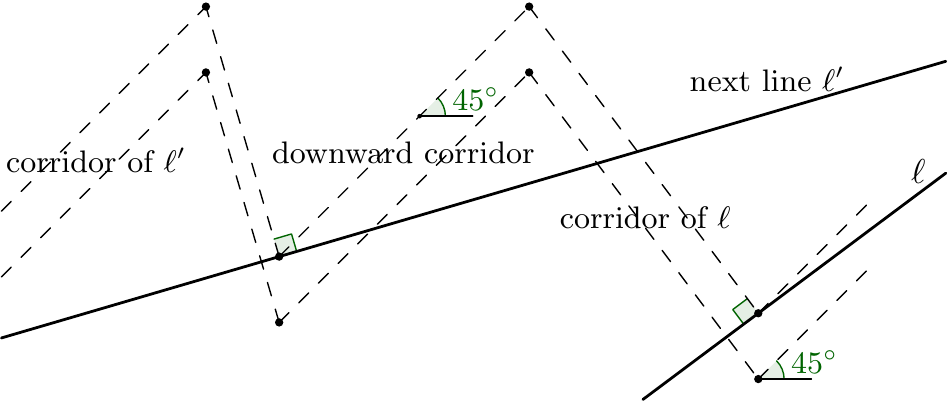}
\caption{\label{fig32}Adding a corridor for a line of $L$.}
\end{minipage}
\hfill
\begin{minipage}[t]{0.5\textwidth}
\centering
\includegraphics[width=\textwidth]{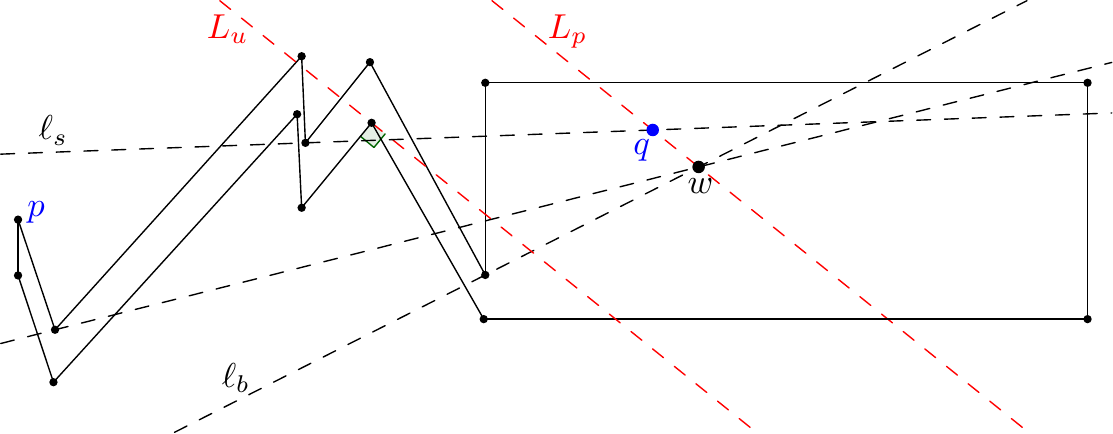}
\caption{\label{fig35}The final monotone polygon constructed for 3 lines.}
\end{minipage}
\end{figure}

To the left of $R$, we construct a ``zigzag'' corridor in the following way. For each line $l$ in $L$, in an arbitrary order, we add a corridor perpendicular to $l$ which extends above the next arbitrarily chosen line (Figure~\ref{fig32}). We then add a corridor with slope $1$ going downward until it hits the next line. This process is continued for all lines in $L$. 

Let the point $p$ be the leftmost vertex of the upper chain of the corridor structure. Consider the inverse attraction region of $p$ in the resulting monotone polygon. 
A point in $R$ can attract $p$, only if it is below all lines of $L$, \ie only if it is below the lower envelope of $L$. In addition the point needs to be above the line $L_u$, where $L_u$ is the rightmost line perpendicular to a lower edge of the corridors with a slope of $-1$ (refer to Figure~\ref{fig35}). 
In order to have all vertices of the lower envelope in the inverse attraction region, we need to guarantee that $L_u$ is to the left of the leftmost vertex of the lower envelope, $w$. Let $L_p$ be a line through $w$ with a scope equal to $-1$. Let $q$ be the intersection of $L_p$ with $l_s$. We start the first corridor of the zigzag to the left of $q$. As the lines have similar slopes this guarantees that $L_u$ is to the left of vertices of the lower envelope. 
Now it is straightforward to compute the lower envelope of $L$ in linear time given the inverse attraction region of $p$.  
\end{proof}

We conclude with the main result of this paper.
\begin{theorem}
The inverse attraction region of a point in a simple polygon can be computed in $\Theta(n \log n)$ time. 
\end{theorem}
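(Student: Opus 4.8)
The plan is to combine the two halves that have already been established. The statement ``$\iar(p)$ can be computed in $\Theta(n\log n)$ time'' is really an upper bound and a matching lower bound glued together, so the proof is a short argument of the form: \emph{Theorem~\ref{theorem:freeSpaces} plus the preceding discussion gives the $O(n\log n)$ upper bound; Theorem~\ref{thm:reduction} gives the $\Omega(n\log n)$ lower bound; hence the bound is tight.} First I would recall that Algorithm~\ref{alg:1} correctly computes $\iar(p)$ --- correctness is exactly Lemma~\ref{lem:11} (together with Lemma~\ref{lem:10} and Theorem~\ref{thm:1}, which identify, for each region $R_i$ of $\spm(p)$, the set of constraining half-planes that matter as those whose domain subpolygon contains the base vertex $v_i$). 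Then I would invoke Theorem~\ref{theorem:freeSpaces}: $\spt(p)$ and $\spm(p)$ are computed in linear time (given a triangulation, which costs $O(n\log n)$, or even $O(n)$ by Chazelle), the $n$ constraining half-planes and their domain subpolygons via $n$ ray-shootings in $O(n\log n)$, and the incremental maintenance of the free spaces $\free_i$ over all regions of $\spm(p)$ in $O(n\log n)$ using the Brodal--Jacob dynamic convex hull structure, since there are only $O(n)$ insertions/deletions and the intersections $R_i\cap\free_i$ have total linear complexity by the complexity theorem of Section~4. Summing these gives $O(n\log n)$ time and $O(n)$ space.

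For the lower bound I would simply cite Theorem~\ref{thm:reduction}: computing $\iar(p)$ even for a monotone polygon requires $\Omega(n\log n)$ time in the algebraic computation tree model, via the reduction from computing the lower envelope of $n$ lines (which has an $\Omega(n\log n)$ lower bound by Yao), since the constructed monotone polygon has $O(n)$ vertices and the lower envelope is recovered from $\iar(p)$ in linear time. The two bounds meet, so the complexity is $\Theta(n\log n)$.

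There is essentially no new obstacle here; all the substantive work has been done in the earlier theorems. The only thing that requires a sentence of care is bookkeeping: making explicit that every ingredient of Algorithm~\ref{alg:1} --- triangulation/shortest-path-tree/shortest-path-map construction, the ray shootings that produce the domain subpolygons, and the sweep that maintains $\free_i$ --- fits within the $O(n\log n)$ budget, and that the model of computation in which the upper bound algorithm operates is (at least as powerful as, and in fact matches) the algebraic computation tree / bounded-degree algebraic decision tree model in which the lower bound is proved, so that the $\Theta$ claim is legitimate rather than comparing incomparable models. I would state this as a one-paragraph proof that reads essentially: \emph{By Theorem~\ref{theorem:freeSpaces} and the correctness of Algorithm~\ref{alg:1} (Lemma~\ref{lem:11}), $\iar(p)$ can be computed in $O(n\log n)$ time; by Theorem~\ref{thm:reduction} this is optimal; the claim follows.}
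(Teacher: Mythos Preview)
Your proposal is correct and matches the paper's approach: the paper states this theorem without an explicit proof, treating it as the immediate combination of the $O(n\log n)$ upper bound from the algorithm (correctness via Lemma~\ref{lem:11}, running time via Lemma~\ref{theorem:freeSpaces}) and the $\Omega(n\log n)$ lower bound from Theorem~\ref{thm:reduction}. The only minor discrepancy is that the result you cite as ``Theorem~\ref{theorem:freeSpaces}'' is actually stated as a lemma in the paper.
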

%
\bibliographystyle{abbrv}
\bibliography{refs}

\clearpage
\appendix

\section{Pseudocode for computing \texorpdfstring{$\iar(p)$}{IAR(p)}}\label{appendix:pseudocode}

\begin{algorithm}[h]
\caption{Optimal inverse attraction region computation.}\label{alg:2}
\renewcommand{\algorithmicrequire}{\textbf{Input:}}
\renewcommand{\algorithmicensure}{\textbf{Output:}}
\renewcommand{\algorithmicforall}{\textbf{for each}}
\begin{algorithmic}[1]
\REQUIRE Simple polygon $P$, and a point $p \in P$.
\ENSURE Inverse attraction region of $p$.
\STATE Compute $\spt(p)$ and $\spm(p)$. 
\FORALL{edge $e \in\spt(p)$}
\STATE Compute constraining half-planes of $e$ and corresponding domain subpolygons.
\ENDFOR
\STATE For some boundary point $v_0$ of $P$, initialize the dynamic convex hull data structure with the points dual to the supporting lines of the constraining half-planes of the domain subpolygons containing $v_0$.
\STATE $v \leftarrow v_0$
\REPEAT
\IF{$v$ is a base vertex of some region $R$ of $\spm(p)$}
\STATE Get region $\free$ from the dynamic convex hull data structure.
\STATE Add $R\cap\free$ to the inverse attraction region of $p$.
\ENDIF
\STATE $v \leftarrow v.\mathit{next}()$ \qquad\quad\COMMENT{Next boundary vertex of $P$ in the counter-clockwise direction}
\FORALL{subpolygon $P_e$ entered}
\STATE Insert the point dual to the supporting line of the constraining half-plane of $P_e$ into the dynamic convex hull data structure.
\ENDFOR
\FORALL{subpoygon $P_e$ exited}
\STATE Delete the point dual to the supporting line of the constraining half-plane of $P_e$ from the dynamic convex hull data structure.
\ENDFOR
\UNTIL{$v\not=v_0$}
\RETURN Inverse attraction region of $p$.
\end{algorithmic}
\end{algorithm}

\end{document}